\documentclass[12pt]{amsart}
\usepackage{enumerate}
\usepackage{hyperref}
\usepackage{graphicx,color}
\usepackage{cite}
\usepackage{amsthm, amsmath, amssymb, mathtools, braket, url}
\usepackage[top=45truemm, bottom=45truemm, left=30truemm, right=30truemm]{geometry}

\renewcommand{\epsilon}{\varepsilon}
\newcommand{\abs}[1]{\left|#1\right|}
\DeclareMathOperator{\Tr}{Tr}
\newcommand{\ketbra}[2]{\ket{#1}\!\bra{#2}}
\DeclareMathOperator{\diag}{diag}
\DeclareMathOperator*{\avg}{avg}
\DeclareMathOperator{\eig}{eig}
\DeclareMathOperator{\id}{id}
\newcommand{\Herm}{\mathsf{Herm}}
\newcommand{\PSD}{\mathsf{PSD}}

\numberwithin{equation}{section}

\newtheorem{theorem}{Theorem}[section]
\newtheorem{lemma}[theorem]{Lemma}
\newtheorem{corollary}[theorem]{Corollary}
\newtheorem{definition}[theorem]{Definition}
\newtheorem{proposition}[theorem]{Proposition}

\theoremstyle{definition}

\newtheorem*{example*}{Example}

\newcommand{\cE}{\mathcal{E}}

\newcommand{\cS}{\mathcal{S}}

\newcommand{\CQ}{\mathrm{CQ}}
\newcommand{\C}{\mathrm{C}}
\newcommand{\EC}{\mathrm{EC}}

\title[Mathematical comparison of classical and quantum mechanisms]{Mathematical comparison of classical and quantum mechanisms in optimization under local differential privacy}

\author[Y.\ Yoshida]{Yuuya Yoshida}
\address{Yuuya Yoshida\\
Graduate School of Mathematics\\ Nagoya University\\ Furo-cho\\ Chikusa-ku\\ Nagoya\\ 464-8602\\ Japan}
\curraddr{}
\email{m17043e@math.nagoya-u.ac.jp}

\subjclass[2010]{Primary 81P45, Secondary 68R01 68R05 62B10}

\keywords{differential privacy, randomized response, quantum state, optimization, data processing inequality}

\begin{document}
\maketitle

\begin{abstract}
Let $\varepsilon>0$. An $n$-tuple $(p_i)_{i=1}^n$ of probability vectors is called $\varepsilon$-differentially private ($\varepsilon$-DP) if $e^\varepsilon p_j-p_i$ has no negative entries for all $i,j=1,\ldots,n$. An $n$-tuple $(\rho_i)_{i=1}^n$ of density matrices is called classical-quantum $\varepsilon$-differentially private (CQ $\varepsilon$-DP) if $e^\varepsilon\rho_j-\rho_i$ is positive semi-definite for all $i,j=1,\ldots,n$. Denote by $\mathrm{C}_n(\varepsilon)$ the set of all $\varepsilon$-DP $n$-tuples, and by $\mathrm{CQ}_n(\varepsilon)$ the set of all CQ $\varepsilon$-DP $n$-tuples. By considering optimization problems under local differential privacy, we define the subset $\mathrm{EC}_n(\varepsilon)$ of $\mathrm{CQ}_n(\varepsilon)$ that is essentially classical. Roughly speaking, an element in $\mathrm{EC}_n(\varepsilon)$ is the image of $(p_i)_{i=1}^n\in\mathrm{C}_n(\varepsilon)$ by a completely positive and trace-preserving linear map (CPTP map). In a preceding study, it is known that $\mathrm{EC}_2(\varepsilon)=\mathrm{CQ}_2(\varepsilon)$. In this paper, we show that $\mathrm{EC}_n(\varepsilon)\not=\mathrm{CQ}_n(\varepsilon)$ for every $n\ge3$, and estimate the difference between $\mathrm{EC}_n(\varepsilon)$ and $\mathrm{CQ}_n(\varepsilon)$ in a certain manner.
\end{abstract}

\section{Introduction}\label{intro}
In data analysis, 
data analysts need to know only some statistical information about private data while protecting the private data.
They hope to maximally utilize private data under some privacy protection.
In general, protection and utilization of private data have a trade-off relation, 
and researchers optimize the trade-off relation \cite{Kairouz,Holohan1,ISIT2019, Geng1,Geng2,Geng3}.

As a way protecting private data, 
Warner \cite{Warner} proposed \textit{randomized response} in 1965, 
in which private data $X$ are converted to other data $Y$ subject to a conditional probability distribution $\mathbb{P}_{Y|X}$, 
and the data $Y$ is released instead of $X$.
Since a data analyst collects only randomized data $Y$, 
private data $X$ are protected.

\vskip1ex
\textbf{Differential privacy.}---%
However, private data are not always protected in the above way.
For instance, if $Y$ is always equal to $X$, 
then it is clear that private data are not protected.
To enforce protection of data, 
we impose the following condition on the conditional probability distribution $\mathbb{P}_{Y|X}$: 
\begin{equation}
	\forall x,x',\ \mathbb{P}_{Y|X}(\cdot|x') \le e^\epsilon\mathbb{P}_{Y|X}(\cdot|x),
	\label{C-DP}
\end{equation}
where $\epsilon>0$ is a constant, and the inequality is entrywise.
This condition is called \textit{$\epsilon$-differential privacy ($\epsilon$-DP)} \cite{Dwork1,Dwork2,Duchi}.
Differential privacy (DP) was introduced by Dwork et al.\ \cite{Dwork1} and Dwork \cite{Dwork2} 
in the \textit{global privacy} context (the case when a company or government releases users' data partially for machine learning).
After that, DP was also introduced by Duchi et al.\ \cite{Duchi} 
in the \textit{local privacy} context (the case when data providers do not trust a data analyst).
Definition~\eqref{C-DP} is that in the local privacy context.

DP has been studied intensively by using classical probability theory.
However, there are only a few studies of quantum versions of DP \cite{qDP1,qDP2,qDP3,qDP4,ISIT2020} to the best of our knowledge.
In this paper, we define a quantum version of DP and investigate its mathematical aspects 
when $n$-ary data $X$ are converted to quantum states $\rho$ depending on $X$ (i.e., classical-quantum setting) in the local privacy context.

To define a quantum version of DP, we consider an $n$-tuple of quantum states $(\rho_x)_{x=1}^n$, 
where $x$ and $\rho_x$ correspond to an input classical state and its output quantum state, respectively.
Now, a data analyst needs to measure a quantum state by a measurement $(M_y)_{y=1}^m$ 
in order to obtain some information about the quantum state.
Hence, following the classical definition of DP, 
we impose the following condition on the quantum states $\rho_x$: 
\begin{equation}
	\forall(M_y)_{y=1}^m\text{ POVM},\ \text{the c.p.d.\ $\mathbb{P}(y|x)=\Tr\rho_xM_y$ satisfies $\epsilon$-DP},
	\label{CQ-DP}
\end{equation}
where ``c.p.d.''\ is an abbreviation of ``conditional probability distribution''; 
a quantum state and a POVM are briefly explained below.
\begin{itemize}
	\item
	A quantum state is defined as a \textit{density matrix}, i.e., a positive semi-definite matrix with trace one.
	\item
	An $m$-tuple $(M_i)_{i=1}^m$ of positive semi-definite matrices is called a \textit{positive-operator-valued measure (POVM)} 
	if the sum of all $M_i$ is equal to the identity matrix.
	A POVM is regarded as a measurement in quantum information theory.
	\item
	Given a quantum state $\rho$ and a measurement $(M_i)_{i=1}^m$, 
	the probability of obtaining each outcome $i=1,\ldots,m$ is $\Tr\rho M_i$.
\end{itemize}
Condition~\eqref{CQ-DP} is called \textit{classical-quantum $\epsilon$-differential privacy (CQ $\epsilon$-DP)} \cite{ISIT2020}, 
and is equivalent to the following one: 
\[
\forall x,x',\ \rho_{x'}\le e^\epsilon\rho_x,
\]
where for Hermitian matrices $H$ and $H'$ the inequality $H\le H'$ means for $H'-H$ to be positive semi-definite. 
The definition of CQ $\epsilon$-DP is a simple extension of the classical one, 
because \eqref{C-DP} can be written as 
\[
\forall x,x',\ p_{x'}\le e^\epsilon p_x
\]
if replacing the probability distributions $\mathbb{P}_{Y|X}(\cdot|x)$ with probability vectors $p_x$, 
where for probability vectors $p$ and $p'$ the inequality $p\le p'$ means for $p'-p$ to be non-negative.
From now on, we use an $n$-tuple $(p_i)_{i=1}^n$ of probability vectors instead of $(\mathbb{P}_{Y|X}(\cdot|x))_{x=1}^n$.

We summarize the above definitions.

\begin{definition}[Classical $\epsilon$-DP \cite{Duchi} and classical-quantum $\epsilon$-DP \cite{ISIT2020}]
	Let $\epsilon>0$ be a real number and $n\ge2$ be an integer.
	An $n$-tuple $(p_i)_{i=1}^n$ of probability vectors is called $\epsilon$-differentially private ($\epsilon$-DP) 
	if $p_i\le e^\epsilon p_j$ for all $i,j=1,\ldots,n$.
	An $n$-tuple $(\rho_i)_{i=1}^n$ of density matrices is called classical-quantum $\epsilon$-differentially private (CQ $\epsilon$-DP) 
	if $\rho_i\le e^\epsilon\rho_j$ for all $i,j=1,\ldots,n$.
	Also, define the sets $\C_n^{(d)}(\epsilon)$, $\CQ_n^{(d)}(\epsilon)$, $\C_n(\epsilon)$ and $\CQ_n(\epsilon)$ as 
	\begin{gather*}
		\C_n^{(d)}(\epsilon) = \{ \text{$\epsilon$-DP }(p_i)_{i=1}^n : \text{all $p_i$ are probability vectors in $\mathbb{R}^d$} \} \quad (d\ge2),\\
		\CQ_n^{(d)}(\epsilon) = \{ \text{CQ $\epsilon$-DP }(\rho_i)_{i=1}^n : \text{all $\rho_i$ are density matrices on $\mathbb{C}^d$} \} \quad (d\ge2),\\
		\C_n(\epsilon) = \bigcup_{d\ge2} \C_n^{(d)}(\epsilon),\quad
		\CQ_n(\epsilon) = \bigcup_{d\ge2} \CQ_n^{(d)}(\epsilon).
	\end{gather*}
\end{definition}

If $(\rho_i)_{i=1}^n$ is CQ $\epsilon$-DP, 
then all $\rho_i$ have the same support, i.e., 
all the ranges of $\rho_i$ are equal to one another.
Hence, we often implicitly assume that all $\rho_i$ have full rank 
if $(\rho_i)_{i=1}^n$ is CQ $\epsilon$-DP.

\vskip1ex
\textbf{Embedding classical states into quantum ones.}---%
Next, let us consider a subset of $\CQ_n(\epsilon)$ that corresponds to $\C_n(\epsilon)$.
For a probability vector $p=(p(i))_{i=1}^d\in\mathbb{R}^d$, 
define $\diag(p)$ as the diagonal matrix with diagonal entries $p(1),\ldots,p(d)$, 
which is a density matrix on $\mathbb{C}^d$.
Since a quantum (resp.\ classical) state is a density matrix (resp.\ probability vector), 
the mapping $\diag(\cdot)$ is an embedding from the set of classical states into the set of quantum ones.
Using the mapping $\diag(\cdot)$, 
we obtain the set 
\[
\diag(\C_n(\epsilon)) \coloneqq \{ (\diag(p_i))_{i=1}^n : (p_i)_{i=1}^n\in\C_n(\epsilon) \}
\]
that corresponds to $\C_n(\epsilon)$.

\vskip1ex
\textbf{Essentially classical elements.}---%
The set $\diag(\C_n(\epsilon))$ is much smaller than $\CQ_n(\epsilon)$, 
but actually, there is a set larger than $\diag(\C_n(\epsilon))$ that is ``essentially classical''.
To describe such a set, we consider two optimization problems: one is the classical case 
\begin{equation*}
	S_n^\C(\epsilon; \Phi) =
	\underbrace{\sup_{(p_i)_{i=1}^n\in\C_n(\epsilon)}}_{\text{Privacy protection}} \underbrace{\Phi(\diag(p_1),\ldots,\diag(p_n))}_{\text{Utility}},
\end{equation*}
which is often considered in information-theoretic studies of DP \cite{Kairouz,Holohan1, Geng1,Geng2,Geng3, ISIT2020}; 
the other is the quantum case 
\[
S_n^\CQ(\epsilon; \Phi) = \underbrace{\sup_{(\rho_i)_{i=1}^n\in\CQ_n(\epsilon)}}_{\text{Privacy protection}} \underbrace{\Phi(\rho_1,\ldots,\rho_n)}_{\text{Utility}}.
\]
The above $\Phi$ is a real-valued function of $n$ density matrices that represents the utility of private data, 
and the conditions $(p_i)_{i=1}^n\in\C_n(\epsilon)$ and $(\rho_i)_{i=1}^n\in\CQ_n(\epsilon)$ represent the privacy protection.
Since the data analyst's purpose is to maximally utilize private data under the privacy protection, 
we arrive at the above optimization problems.

Now, we want to define a subset of $\CQ_n(\epsilon)$ that is ``essentially classical''.
For this purpose, assume that the objective function $\Phi$ must satisfy monotonicity for \textit{completely positive and trace-preserving linear maps (CPTP maps)}; 
for the definition of CPTP maps, see Appendix.

\begin{definition}[Monotonicity for CPTP maps]\label{mono}
	A real-valued function $\Phi$ of $n$ density matrices is called monotone for CPTP maps if 
	\[
	\Phi(\Lambda(\rho_1),\ldots,\Lambda(\rho_n)) \le \Phi(\rho_1,\ldots,\rho_n)
	\]
	for all density matrices $\rho_1,\ldots,\rho_n$ and CPTP maps $\Lambda$.
	This inequality is called the data processing inequality (or information processing inequality).
\end{definition}

Since a CPTP map is regarded as a quantum operation in quantum information theory, 
information-theoretic quantities usually satisfy monotonicity for CPTP maps.
For example, quantum relative entropy, 
symmetric logarithmic derivative (SLD) Fisher information, 
Kubo--Mori--Bogoljubov (KMB) Fisher information, 
right logarithmic derivative (RLD) Fisher information, 
and trace distance satisfy monotonicity for CPTP maps \cite[Theorems~5.7 and 6.2]{book1}, \cite[Theorem~6.7 and Lemma~6.9]{book2}.

By monotonicity for CPTP maps, it follows that 
\[
\sup_{\substack{(p_i)_{i=1}^n\in\C_n(\epsilon) \\ \Lambda\text{ CPTP map}}} \Phi\bigl( \Lambda(\diag(p_1)),\ldots,\Lambda(\diag(p_n)) \bigr)
\le S_n^\C(\epsilon; \Phi).
\]
Moreover, the opposite inequality also holds 
since the identity mapping on $\Herm(d)$ is a CPTP map, 
where $\Herm(d)$ denotes the set of all Hermitian matrices on $\mathbb{C}^d$.
This fact leads us to the following definition.

\begin{definition}[Essentially classical element]\label{EC}
	Let $\epsilon>0$ be a real number and $n\ge2$ be an integer.
	We say that $(\rho_i)_{i=1}^n\in\CQ_n(\epsilon)$ is essentially classical 
	if there exist $(p_i)_{i=1}^n\in\C_n(\epsilon)$ and a CPTP map $\Lambda$ such that 
	$\Lambda(\diag(p_i))=\rho_i$ for all $i=1,\ldots,n$.
	We denote by $\EC_n(\epsilon)$ the set of all essentially classical elements in $\CQ_n(\epsilon)$.
\end{definition}

Although an element in $\EC_n(\epsilon)$ consists of quantum states, 
the equality 
\[
S_n^\EC(\epsilon; \Phi)=S_n^\C(\epsilon; \Phi)
\]
holds, where $S_n^\EC(\epsilon; \Phi)$ is defined in the same way as $S_n^\CQ(\epsilon; \Phi)$.
Hence, the comparison of $S_n^\C(\epsilon; \Phi)$ and $S_n^\CQ(\epsilon; \Phi)$ 
is the same as that of $S_n^\EC(\epsilon; \Phi)$ and $S_n^\CQ(\epsilon; \Phi)$.

\vskip1ex
\textbf{Comparison of $\EC_n(\epsilon)$ and $\CQ_n(\epsilon)$.}---%
Although the set $\EC_n(\epsilon)$ is a subset of $\CQ_n(\epsilon)$, 
we are interested in whether they are equal to each other or not.
If $\EC_n(\epsilon)$ is equal to $\CQ_n(\epsilon)$, 
then $S_n^\C(\epsilon; \Phi)=S_n^\EC(\epsilon; \Phi)=S_n^\CQ(\epsilon; \Phi)$, 
i.e., CQ $\epsilon$-DP mechanisms have no quantum advantage in optimization.
In this perspective, it is important to compare $\EC_n(\epsilon)$ with $\CQ_n(\epsilon)$.
The following fact follows from \cite[Theorem~1]{ISIT2020}.

\begin{proposition}\label{n=2}
	For all $\epsilon>0$, $\EC_2(\epsilon)=\CQ_2(\epsilon)$.
\end{proposition}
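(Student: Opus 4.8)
The inclusion $\EC_2(\epsilon)\subseteq\CQ_2(\epsilon)$ is immediate, since for $(p_1,p_2)\in\C_2(\epsilon)$ and density matrices $\sigma_k$ one has $e^\epsilon\sum_k p_2(k)\sigma_k-\sum_k p_1(k)\sigma_k=\sum_k\bigl(e^\epsilon p_2(k)-p_1(k)\bigr)\sigma_k\ge0$, and symmetrically; so the real content is the reverse inclusion $\CQ_2(\epsilon)\subseteq\EC_2(\epsilon)$. Given $(\rho_1,\rho_2)\in\CQ_2(\epsilon)$, I would first pass to the common support of $\rho_1$ and $\rho_2$ and thereby assume both are positive definite. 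The plan is then to \emph{simultaneously diagonalize $\rho_1$ and $\rho_2$ by congruence}: put $M:=\rho_2^{-1/2}\rho_1\rho_2^{-1/2}$, which is positive definite, write $M=\sum_k a_k\ketbra{u_k}{u_k}$ in an orthonormal eigenbasis, and set $\ket{v_k}:=\rho_2^{1/2}\ket{u_k}$. Then $\rho_1=\sum_k a_k\ketbra{v_k}{v_k}$ and $\rho_2=\sum_k\ketbra{v_k}{v_k}$.

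Next I would translate the DP constraint through this picture: conjugating $e^{-\epsilon}\rho_2\le\rho_1\le e^\epsilon\rho_2$ by $\rho_2^{-1/2}$ gives $e^{-\epsilon}I\le M\le e^\epsilon I$, i.e.\ $a_k\in[e^{-\epsilon},e^\epsilon]$ for every $k$. Now define $\sigma_k:=\ketbra{v_k}{v_k}/\|v_k\|^2$ (a rank-one density matrix, well defined since $\rho_2^{1/2}$ is invertible), together with $p_2(k):=\|v_k\|^2$ and $p_1(k):=a_k\|v_k\|^2$. By construction $\rho_i=\sum_k p_i(k)\sigma_k$; the numbers $p_i(k)$ are nonnegative with $\sum_k p_i(k)=\Tr\rho_i=1$, so $p_1,p_2$ are probability vectors; and $p_1(k)=a_k\|v_k\|^2\le e^\epsilon\|v_k\|^2=e^\epsilon p_2(k)$ and likewise $p_2(k)\le e^\epsilon p_1(k)$, so $(p_1,p_2)\in\C_2(\epsilon)$. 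Hence $(\rho_1,\rho_2)\in\EC_2(\epsilon)$.

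The calculations here are routine; the only point needing care is the support bookkeeping in the degenerate case (one should work inside the common support throughout, or equivalently re-embed the rank-one $\sigma_k$ into the ambient space), which does not affect the argument. The one genuinely essential input is the structural fact that \emph{two} density matrices on a finite-dimensional space can always be brought to simultaneous diagonal form by a single, in general non-unitary, congruence; this is what makes the $n=2$ case ``essentially classical''. For $n\ge3$ a pencil of three or more positive matrices generically admits no such simultaneous congruence-diagonalization, which is presumably the obstruction exploited in the rest of the paper to separate $\EC_n(\epsilon)$ from $\CQ_n(\epsilon)$.
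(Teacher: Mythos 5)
Your proof is correct, and it is worth noting that the paper itself does not prove Proposition~\ref{n=2} at all: it is quoted from the earlier work \cite{ISIT2020}, so there is no in-paper argument to compare against. Your route --- restrict to the common support, congruence-diagonalize the pair via $M=\rho_2^{-1/2}\rho_1\rho_2^{-1/2}=\sum_k a_k\ketbra{u_k}{u_k}$, set $\ket{v_k}=\rho_2^{1/2}\ket{u_k}$, and read off $p_2(k)=\|v_k\|^2$, $p_1(k)=a_k\|v_k\|^2$, $\sigma_k=\ketbra{v_k}{v_k}/\|v_k\|^2$ --- is a clean, self-contained verification: conjugation by $\rho_2^{-1/2}$ preserves the semidefinite order, so the CQ constraint becomes $e^{-\epsilon}I\le M\le e^\epsilon I$, i.e.\ $a_k\in[e^{-\epsilon},e^\epsilon]$, which is exactly the entrywise condition making $(p_1,p_2)\in\C_2(\epsilon)$; the traces work out since $\sum_k\|v_k\|^2=\Tr\rho_2=1$ and $\sum_k a_k\|v_k\|^2=\Tr\rho_1=1$; and the easy inclusion $\EC_2(\epsilon)\subseteq\CQ_2(\epsilon)$ is handled as you say. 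Your support bookkeeping remark is also the right one, since CQ $\epsilon$-DP forces equal supports. Note additionally that your construction automatically exhibits the CPTP map of Proposition~\ref{CPTP} (the preparation channel $X\mapsto\sum_k\braket{k|X|k}\sigma_k$), and your closing remark correctly identifies why the argument is special to $n=2$: a single congruence can simultaneously diagonalize only a pair of positive matrices, and the rest of the paper exploits precisely the failure of any such classical simulation for $n\ge3$, though it does so through a monotone-quantity comparison (Theorem~\ref{main0}) rather than through an algebraic obstruction to simultaneous diagonalization.
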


By Proposition~\ref{n=2}, it follows that $S_2^\C(\epsilon; \Phi)=S_2^\EC(\epsilon; \Phi)=S_2^\CQ(\epsilon; \Phi)$.
However, it turns out that $\EC_n(\epsilon)\not=\CQ_n(\epsilon)$ for every $n\ge3$ (Corollary~\ref{main1}).
Hence, using the following definition, 
we investigate the difference between $\EC_n(\epsilon)$ and $\CQ_n(\epsilon)$.

\begin{definition}\label{En}
	For $\epsilon>0$ and $n\ge2$, define the set $\cE_n(\epsilon)$ as 
	\[
	\cE_n(\epsilon) = \{ \epsilon'>0 : \CQ_n(\epsilon)\subset\EC_n(\epsilon') \}.
	\]
\end{definition}

Actually, $\cE_n(\epsilon)$ is non-empty (Theorem~\ref{main2}).
Since $\EC_n(\epsilon)$ is monotonically increasing in $\epsilon>0$, 
the set $\cE_n(\epsilon)$ is an interval that is not bounded above, i.e., 
of the form $[\epsilon_{\inf},\infty)$ or $(\epsilon_{\inf},\infty)$.
Since $\EC_n(\epsilon)$ is a subset of $\CQ_n(\epsilon)$, 
and since $\CQ_n(\epsilon)$ is strictly increasing in $\epsilon>0$, 
every $\epsilon'\in\cE_n(\epsilon)$ is greater than or equal to $\epsilon$.
Therefore, Proposition~\ref{n=2} implies that $\cE_2(\epsilon)=[\epsilon,\infty)$.
We estimate the infimum of $\cE_n(\epsilon)$.

\vskip1ex
\textbf{Main results.}---%
In this paper, we show the following theorems.

\begin{theorem}\label{main2}
	For all $\epsilon>0$ and $n\ge2$, 
	there exists $\epsilon'\in\cE_n(\epsilon)$ such that $e^{\epsilon'}-1=(n-1)(e^\epsilon-1)$.
\end{theorem}

\begin{theorem}\label{main3}
	For all $n\ge2$, $\epsilon>0$ and $\epsilon'\in\cE_n(\epsilon)$, 
	\[
	\frac{e^{\epsilon'}-1}{e^\epsilon-1}
	\ge F_n(e^\epsilon-1),
	\]
	where $F_n$ is defined as follows: 
	\begin{align*}
		g_n(t) &=
		\begin{dcases}
			\frac{2}{t-1}(\sqrt{(n-1)(n-t)}+n-t) & 1<t\le n,\\
			\infty & t=1,
		\end{dcases}
		\\
		a_{n,t}(x) &= t\Bigl( \frac{n-t}{n-1}(x+2)^2 - x^2 \Bigr) \quad (1\le t\le n,\ x\ge0),\\
		f_{n,k}(x) &= \frac{(n+2k)x + \sqrt{(n+2k)^2x^2 + 8na_{n,k}(x)}}{2a_{n,k}(x)} \quad (1\le k\le n/2,\ 0\le x<g_n(k)),\\
		F_n(x) &= \min\{ f_{n,k}(x) : 1\le k\le n/2\text{ with }x<g_n(k) \} \quad (x\ge0).
	\end{align*}
	For several properties of the functions $g_n$, $a_{n,t}$, $f_{n,k}$ and $F_n$, 
	see Lemmas~$\ref{lem4}$ and $\ref{lem5}$.
\end{theorem}

Theorem~\ref{main2} with $n=2$ implies Proposition~\ref{n=2}.
Moreover, by Theorems~\ref{main2} and \ref{main3}, 
the infimum $\epsilon_{\inf}=\epsilon_{\inf}(n,\epsilon)=\inf\cE_n(\epsilon)$ satisfies that 
\begin{equation}
	F_n(e^\epsilon-1) \le \frac{e^{\epsilon_{\inf}}-1}{e^\epsilon-1} \le n-1.
	\label{eq01}
\end{equation}
Lemma~\ref{lem5}, which is proved in Section~\ref{proof3}, yields that 
\begin{itemize}
	\item
	for all $x\ge0$, $F_2(x)=1$;
	\item
	for all $n\ge3$, $F_n$ is strictly increasing;
	\item
	for all $n\ge2$, 
	\[
	F_n(0) = \sqrt{\frac{n(n-1)}{2\lfloor{n/2}\rfloor\lceil{n/2}\rceil}}\quad\text{and}\quad
	\lim_{x\to\infty} F_n(x) = \frac{n+2}{4},
	\]
	where $\lfloor{x}\rfloor$ (resp.\ $\lceil{x}\rceil$) denotes the greatest (resp.\ least) integer $\le x$ (resp.\ $\ge x$) 
	for a real number $x$.
\end{itemize}
Since $F_n(x)>F_n(0)>1$ for all $n\ge3$ and $x>0$, 
we obtain the following corollary.

\begin{corollary}\label{main1}
	For all $\epsilon>0$ and $n\ge3$, 
	$\EC_n(\epsilon)\not=\CQ_n(\epsilon)$.
\end{corollary}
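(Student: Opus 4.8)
The plan is to deduce the corollary from Theorem~\ref{main0} by a short contrapositive argument on the extremal values. Fix $n\ge3$ and $\epsilon>0$, and suppose toward a contradiction that $\CQ_n(\epsilon)=\EC_n(\epsilon)$. Pick any $\theta\in[0,1]$ (say $\theta=0$). The quantity $\min_{i\ne j}J_\theta(\rho_i,\rho_j)$ is one fixed real-valued functional of an $n$-tuple $(\rho_i)_{i=1}^n$ of density matrices, so taking its supremum over two equal sets gives the same number; that is, $M_n^\EC(\epsilon;J_\theta)=M_n^\CQ(\epsilon;J_\theta)$. But the last assertion of Theorem~\ref{main0} states $M_n^\EC(\epsilon;J_\theta)<M_n^\CQ(\epsilon;J_\theta)$ for every $\theta\in[0,1]$, $\epsilon>0$ and $n\ge3$, a contradiction. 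Hence $\CQ_n(\epsilon)\ne\EC_n(\epsilon)$.

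The one point requiring a line of care is that $J_\theta$ is defined only for full-rank density matrices, whereas elements of $\CQ_n(\epsilon)$ need not have full rank. As noted after the definition of $\epsilon$-DP, every CQ $\epsilon$-DP tuple $(\rho_i)_{i=1}^n$ has a common support $S$; restricting each $\rho_i$ to $S$ makes them full rank without changing whether the tuple lies in $\CQ_n(\epsilon)$ or in $\EC_n(\epsilon)$, and without changing any $J_\theta(\rho_i,\rho_j)$. So $\min_{i\ne j}J_\theta(\rho_i,\rho_j)$ is unambiguously defined on both sets and the two suprema in Theorem~\ref{main0} genuinely range over the same functional, which is exactly what legitimizes the equality of suprema used above.

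There is essentially no obstacle in this deduction: all the content sits in Theorem~\ref{main0}, whose strict inequality $M_n^\EC<M_n^\CQ$ does the work. An alternative, constructive route—pursued in Section~\ref{concrete} but not needed here—is to exhibit an explicit $(\rho_i)_{i=1}^n\in\CQ_n(\epsilon)$ with $\min_{i\ne j}J_\theta(\rho_i,\rho_j)$ strictly larger than the classical bound $M_n^\EC(\epsilon;J_\theta)$; any such tuple is a concrete witness separating the two sets, and the argument above is simply its nonconstructive shadow.
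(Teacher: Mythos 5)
Your proposal is correct and is exactly the argument the paper intends when it says Theorem~\ref{main0} implies Corollary~\ref{main1} ``immediately'': the strict inequality $M_n^\EC(\epsilon;J_\theta)<M_n^\CQ(\epsilon;J_\theta)$ forces the two sets to differ. The extra remark about reducing to the common support to make $J_\theta$ well defined is a reasonable point of care but does not change the route.
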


Since we give a concrete objective function $\Phi$ such that 
$S_n^\C(\epsilon; \Phi)=S_n^\EC(\epsilon; \Phi)<S_n^\CQ(\epsilon; \Phi)$ for every $n\ge3$ (Theorem~\ref{main0}), 
Corollary~\ref{main1} also follows from Theorem~\ref{main0} 
(although Theorem~\ref{main3} is proved by using Theorem~\ref{main0}).
Theorem~\ref{main0} implies a sufficient condition 
for a CQ $\epsilon$-DP $n$-tuple not to lie in $\EC_n(\epsilon)$ (Corollary~\ref{main4}).
Using Corollary~\ref{main4}, 
we construct CQ $\epsilon$-DP $n$-tuples that do not lie in $\EC_n(\epsilon)$ (Section~\ref{concrete}).

We mention a relation among this paper and existing studies briefly.
Ref.\ \cite{ISIT2020} handles the classical-quantum setting as well as this paper, 
but Refs.\ \cite{qDP1,qDP2,qDP3,qDP4} consider the case when input and output states are quantum.
The definition of CQ $\epsilon$-DP can be regarded as a special case of quantum DP \cite{qDP3}, 
but \cite{qDP3} does not include our results.

\vskip1ex
\textbf{Supplement on the set $\EC_n(\epsilon)$.}---%
Actually, the set $\EC_n(\epsilon)$ can be written without CPTP maps.

\begin{proposition}\label{CPTP}
	For all $\epsilon>0$ and $n\ge2$, 
	\begin{equation}
		\EC_n(\epsilon) = \biggl\{ \Bigl( \sum_k p_i(k)\sigma_k \Bigr)_{i=1}^n : (p_i)_{i=1}^n\in\C_n(\epsilon),\ \text{density matrices }\sigma_k \biggr\},
		\label{eq-CPTP}
	\end{equation}
	where the above sum is taken all over $k=1,\ldots,d$ if $d$ is the dimension of the vector space that $p_1,\ldots,p_n$ inhabit.
\end{proposition}

Proposition~\ref{CPTP} can easily be checked; see Appendix.
Although we have defined the set $\EC_n(\epsilon)$ with CPTP maps, 
the same set is obtained even if replacing CPTP maps with positive and trace-preserving linear maps (PTP maps).
That is, complete positivity is unnecessary, and positivity suffices in Definition~\ref{EC}.
However, we have used CPTP maps in Definition~\ref{EC} 
because CPTP maps are more natural in quantum information theory than PTP maps, 
and monotonicity for CPTP maps is used in Section~\ref{result}.

\section{Another main result}\label{result}

In this section, we state another main result (Theorem~\ref{main0}), 
which is proved in Section~\ref{proof0}.
Theorem~\ref{main0} asserts that a certain objective function $\Phi$ satisfies that 
$S_n^\C(\epsilon; \Phi)=S_n^\EC(\epsilon; \Phi)<S_n^\CQ(\epsilon; \Phi)$ for all $\epsilon>0$ and $n\ge3$.
The objective function $\Phi$ in Theorem~\ref{main0} is constructed 
by using the RLD Fisher information of a one-parameter family.

\begin{definition}[RLD Fisher information {\cite[p.~260]{book1}}]
	For density matrices $\rho$ and $\sigma$ with full rank, 
	we denote the RLD Fisher information of the one-parameter family $((1-\theta)\rho+\theta\sigma)_{\theta\in[0,1]}$ 
	at the point $\theta$ as 
	\[
	J_\theta(\rho,\sigma) = \Tr(\sigma-\rho)^2((1-\theta)\rho+\theta\sigma)^{-1}.
	\]
	For probability vectors $p$ and $q$, 
	we set $J_\theta(p,q)=J_\theta(\diag(p),\diag(q))$.
\end{definition}

The function $J_\theta$ satisfies monotonicity for CPTP maps (see also Definition~\ref{mono}).
If $(\rho_i)_{i=1}^n$ is CQ $\epsilon$-DP, 
we may assume that all $\rho_i$ have full rank (see Section~\ref{intro}), 
and hence, we can consider the value $J_\theta(\rho_i,\rho_j)$ for all $i,j=1,\ldots,n$.
Also, for probability vectors $p$ and $q$, 
the value $J_\theta(p,q)$ is the Fisher information in the classical sense.
From now on, we denote by $\avg_{i\not=j} \alpha_{i,j}$ the arithmetic mean of real numbers $\alpha_{i,j}$, $i\not=j$.

\begin{theorem}\label{main0}
	For real numbers $\theta\in[0,1]$ and $\epsilon>0$ and an integer $n\ge2$, 
	we define the suprema $M_n^\C(\epsilon;J_\theta)$, $M_n^\EC(\epsilon;J_\theta)$ and $M_n^\CQ(\epsilon;J_\theta)$ as 
	\begin{align*}
		M_n^\C(\epsilon;J_\theta) &= \sup_{(p_i)_{i=1}^n\in\C_n(\epsilon)} \min_{i\not=j} J_\theta(p_i,p_j),\\
		M_n^{\mathrm{X}}(\epsilon;J_\theta) &= \sup_{(\rho_i)_{i=1}^n\in\mathrm{X}_n(\epsilon)} \min_{i\not=j} J_\theta(\rho_i,\rho_j)
		\quad(\mathrm{X}=\EC,\CQ).
	\end{align*}
	Then, for all $\theta\in[0,1]$, $\epsilon>0$ and $n\ge2$, 
	we have $M_n^\CQ(\epsilon;J_\theta) = M_2^\C(\epsilon;J_\theta)$ and 
	\begin{align*}
		M_n^\EC(\epsilon;J_\theta) &= M_n^\C(\epsilon;J_\theta)
		= \sup_{(p_i)_{i=1}^n\in\C_n(\epsilon)} \avg_{i\not=j} J_\theta(p_i,p_j)\\
		&= \frac{f_\theta(e^\epsilon,1)+f_\theta(1,e^\epsilon)}{n-1}\max_{1\le k\le n/2} \frac{k(n-k)}{ke^\epsilon+n-k},
	\end{align*}
	where $f_\theta(\alpha,\beta) \coloneqq (\alpha-\beta)^2/((1-\theta)\alpha+\theta\beta)$ for $\alpha,\beta>0$.
	Moreover, $M_n^\EC(\epsilon;J_\theta) < M_n^\CQ(\epsilon;J_\theta)$ for all $\theta\in[0,1]$, $\epsilon>0$ and $n\ge3$.
\end{theorem}

If we set $\Phi(\rho_1,\ldots,\rho_n)=\min_{i\not=j} J_\theta(\rho_i,\rho_j)$, 
then $S_n^{\mathrm{X}}(\epsilon; \Phi)=M_n^{\mathrm{X}}(\epsilon; J_\theta)$ for $\mathrm{X}=\C,\EC,\CQ$.
Hence, Theorem~\ref{main0} gives us a concrete objective function $\Phi$ such that 
$S_n^\C(\epsilon; \Phi)=S_n^\EC(\epsilon; \Phi)<S_n^\CQ(\epsilon; \Phi)$ for all $\epsilon>0$ and $n\ge3$.
Moreover, Theorem~\ref{main0} implies Corollary~\ref{main1} and the following corollary immediately.

\begin{corollary}\label{main4}
	Let $\epsilon>0$ be a real number and $n\ge3$ be an integer.
	If $(\rho_i)_{i=1}^n\in\CQ_n(\epsilon)$ satisfies that 
	$M_n^\C(\epsilon;J_\theta) < \avg_{i\not=j} J_\theta(\rho_i,\rho_j)$ for some $\theta\in[0,1]$, 
	then $(\rho_i)_{i=1}^n$ does not lie in $\EC_n(\epsilon)$.
\end{corollary}

\section{Proof of Theorem~$\ref{main2}$}\label{proof2}

Denote by $I_d$ the identity matrix of order $d$, 
and by $\mathbf{1}_d$ the column vector $[1,\ldots,1]^\top\in\mathbb{R}^d$.
We often use the bra-ket notation: for $u\in\mathbb{C}^d$, 
$\ket{u}$ and $\bra{u}$ denote the column vector $u$ and its conjugate transpose, respectively.
Hence, $\braket{\cdot|\cdot}$ is the standard Hermitian inner product on $\mathbb{C}^d$, 
and $\ketbra{u}{u}$ is a rank-one orthogonal projection for every unit vector $u\in\mathbb{C}^d$.

To prove Theorem~\ref{main2}, we begin with the following preliminary lemma.

\begin{lemma}\label{5.1-lem1}
	If density matrices $\rho_1,\ldots,\rho_n$ on $\mathbb{C}^d$ are orthogonal to each other, i.e., $\Tr\rho_i\rho_j=0$ for all $i\not=j$, 
	then for all density matrices $\sigma_1,\ldots,\sigma_n$ on $\mathbb{C}^{d'}$ 
	there exists a CPTP map $\Lambda$ such that 
	$\Lambda(\rho_i)=\sigma_i$ for all $i=1,\ldots,n$.
\end{lemma}
\begin{proof}
	Let $\rho_1,\ldots,\rho_n$ be density matrices that are orthogonal to each other.
	For $i=1,\ldots,n$, take the orthogonal projection $P_i$ onto the support of $\rho_i$.
	Put $P_0 = I_d - \sum_{i=1}^n P_i$.
	Then $P_0,P_1,\ldots,P_n$ are also orthogonal to each other.
	Defining $\Lambda(X) = \sum_{i=0}^n (\Tr XP_i)\sigma_i$ for $X\in\Herm(d)$, 
	we find that $\Lambda$ is a CPTP map satisfying that 
	$\Lambda(\rho_i)=\sigma_i$ for all $i=1,\ldots,n$.
\end{proof}

\begin{proof}[Proof of Theorem~$\ref{main2}$]
	Let $\epsilon>0$ be a real number and $n\ge2$ be an integer.
	Define $\epsilon'>0$ as $e^{\epsilon'}-1=(n-1)(e^\epsilon-1)$, 
	and the $n$-tuple $(p_i)_{i=1}^n$ of probability vectors in $\mathbb{R}^n$ as 
	\[
	[p_1,\ldots,p_n]
	= \frac{(e^{\epsilon'}-1)I_n + \mathbf{1}_n\mathbf{1}_n^\top}{e^{\epsilon'}+n-1}
	= \frac{1}{e^{\epsilon'}+n-1}
	\begin{bmatrix}
		e^{\epsilon'}&1&\cdots&1\\
		1&e^{\epsilon'}&\ddots&\vdots\\
		\vdots&\ddots&\ddots&1\\
		1&\cdots&1&e^{\epsilon'}
	\end{bmatrix}.
	\]
	Then $(p_i)_{i=1}^n$ lies in $\C_n(\epsilon')$.
	Moreover, for every $k=1,\ldots,n$, 
	\begin{equation}
		(e^{\epsilon'}+n-2)p_k - \sum_{i\not=k} p_i
		= \frac{(e^{\epsilon'}+n-2)e^{\epsilon'} - (n-1)}{e^{\epsilon'}+n-1}e_k
		= (e^{\epsilon'}-1)e_k,
		\label{eq02}
	\end{equation}
	where $(e_i)_{i=1}^n$ denotes the standard basis of $\mathbb{R}^n$.
	\par
	Now, let $(\rho_i)_{i=1}^n$ be CQ $\epsilon$-DP.
	We show that $(\rho_i)_{i=1}^n$ lies in $\EC_n(\epsilon')$.
	By the definitions of CQ $\epsilon$-DP and $\epsilon'$, for every $k=1,\ldots,n$, 
	\begin{align*}
		(e^{\epsilon'}+n-2)\rho_k - \sum_{i\not=k} \rho_i
		&\ge (e^{\epsilon'}+n-2)\rho_k - \sum_{i\not=k} e^\epsilon\rho_k\\
		&= \bigl( e^{\epsilon'}+n-2 - (n-1)e^\epsilon \bigr)\rho_k = 0.
	\end{align*}
	Thus, the left-hand side can be rewritten as 
	\[
	(e^{\epsilon'}+n-2)\rho_k - \sum_{i\not=k} \rho_i
	= (e^{\epsilon'}-1)\sigma_k,
	\]
	with density matrices $\sigma_1,\ldots,\sigma_n$.
	By Lemma~\ref{5.1-lem1}, there exists a CPTP map $\Lambda$ such that 
	$\Lambda(\ketbra{e_k}{e_k})=\sigma_k$ for every $k=1,\ldots,n$.
	This and \eqref{eq02} yield that for every $k=1,\ldots,n$, 
	\begin{align*}
		(e^{\epsilon'}+n-2)\Lambda(\diag(p_k)) - \sum_{i\not=k} \Lambda(\diag(p_i))
		&= (e^{\epsilon'}-1)\Lambda(\diag(\ketbra{e_k}{e_k}))
		= (e^{\epsilon'}-1)\sigma_k\\
		&= (e^{\epsilon'}+n-2)\rho_k - \sum_{i\not=k} \rho_i.
	\end{align*}
	Solving the above simultaneous equations, 
	we obtain $\Lambda(\diag(p_k))=\rho_k$ for every $k=1,\ldots,n$.
	This implies that $(\rho_i)_{i=1}^n$ lies in $\EC_n(\epsilon')$.
\end{proof}

\section{Proof of Theorem~$\ref{main3}$}\label{proof3}

In this section, assuming Theorem~\ref{main0}, we prove Theorem~\ref{main3}.
First, we begin with several lemmas on the functions $g_n$, $a_{n,t}$, $f_{n,k}$ and $F_n$ in Theorem~\ref{main3}.
These lemmas are necessary to prove Theorem~\ref{main3}.

\begin{lemma}\label{lem4}
	For $g_n$ and $a_{n,t}$ in Theorem~$\ref{main3}$, the following facts hold.
	\begin{enumerate}
		\item
		For all $n\ge2$, $g_n$ is strictly decreasing.
		\item
		For all $n\ge2$, $1\le t<n$ and $x\ge0$, 
		the inequality $x<g_n(t)$ is equivalent to $a_{n,t}(x)>0$.
		\item
		For all $n\ge2$, $1<t<n$ and $x\ge0$, 
		the equality $x=g_n(t)$ is equivalent to $a_{n,t}(x)=0$.
	\end{enumerate}
\end{lemma}
\begin{proof}
	Proof of fact~1. Trivial.
	\par
	Proof of fact~2.
	If $t=1$, the assertion is trivial.
	For all $n\ge2$, $1<t<n$ and $x\ge0$, we have 
	\begin{align*}
		a_{n,t}(x)>0
		&\iff \sqrt{\frac{n-t}{n-1}}(x+2) > x
		\iff 2\sqrt{\frac{n-t}{n-1}} > \Bigl( 1-\sqrt{\frac{n-t}{n-1}} \Bigr)x\\
		&\iff 2\sqrt{\frac{n-t}{n-1}}\Bigl( 1+\sqrt{\frac{n-t}{n-1}} \Bigr) > \Bigl( 1-\frac{n-t}{n-1} \Bigr)x = \frac{t-1}{n-1}x\\
		&\iff x < \frac{2}{t-1}( \sqrt{(n-1)(n-t)}+n-t ) = g_n(t).
	\end{align*}
	Therefore, fact~2 holds.
	\par
	Proof of fact~3.
	Due to $1<t<n$, fact~3 can be proved in a similar way to fact~2.
\end{proof}

\begin{lemma}\label{lem5}
	For $g_n$, $a_{n,t}$, $f_{n,k}$ and $F_n$ in Theorem~$\ref{main3}$, the following facts hold.
	\begin{enumerate}
		\item
		For all $n\ge2$, $1\le k\le n/2$ and $0\le x<g_n(k)$, 
		the quadratic equation 
		\begin{equation}
			\frac{k(n-k)}{n-1}(x+2)^2y^2 = (xy+2)(kxy+n)
			\label{eq03}
		\end{equation}
		for $y$ has the unique positive solution $y=f_{n,k}(x)$, 
		and the other solution is negative.
		\item
		For all $x\ge0$, $F_2(x)=1$.
		\item
		For all $n\ge3$ and $1\le k\le n/2$, $f_{n,k}(x)$ is strictly increasing in $0\le x<g_n(k)$.
		\item
		For all $n\ge3$, $F_n$ is strictly increasing.
		\item
		For all $n\ge2$, 
		\[
		F_n(0) = \sqrt{\frac{n(n-1)}{2\lfloor{n/2}\rfloor\lceil{n/2}\rceil}}\quad\text{and}\quad
		\lim_{x\to\infty} F_n(x) = \frac{n+2}{4}.
		\]
		\item
		For all $n\ge3$ and $1<t\le n/2$, 
		the quadratic equation 
		\[
		p_{n,t}(x) \coloneqq a_{n,t}(x)(n+2)^2/4 - (n+2t)(n+2)x - 8n = 0
		\]
		for $x$ has a unique positive solution $x=G_n(t)$, 
		and the other solution is negative.
		Moreover, $G_n(t)<g_n(t)$ for all $n\ge3$ and $1<t\le n/2$.
		\item
		For all $n\ge3$ and $x\ge0$, 
		\begin{equation*}
			F_n(x) = \min\{ f_{n,k}(x) : 1\le k\le n/2\text{ with }x<G_n(k) \},
		\end{equation*}
		where $G_n(1) \coloneqq \infty$.
		\item
		For all $n\ge4$ and $2\le k\le n/2$, $G_n(k)<2n-6$.
	\end{enumerate}
\end{lemma}
\begin{proof}
	Proof of fact~1.
	Let $n\ge2$ and $1\le k\le n/2$ be integers, and $0\le x<g_n(k)$ be a real number.
	Eq.~\eqref{eq03} can be rewritten as 
	\begin{align}
		\text{\eqref{eq03}}
		&\iff k\Bigl( \frac{n-k}{n-1}(x+2)^2 - x^2 \Bigr)y^2 - (n+2k)xy - 2n = 0\nonumber\\
		&\iff a_{n,k}(x)y^2 - (n+2k)xy - 2n = 0. \label{eq04}
	\end{align}
	Since $a_{n,k}(x)>0$ by fact~2 of Lemma~\ref{lem4}, fact~1 follows.
	\par
	Proof of fact~2.
	Since $F_2(x)=f_{2,1}(x)$, fact~2 follows from the definition of $f_{2,1}$.
	\par
	Proof of fact~3.
	Let $n\ge3$ and $1\le k\le n/2$ be integers.
	We show that $f_{n,k}$ is strictly increasing.
	Put $y=f_{n,k}(x)$ with $0\le x<g_n(k)$.
	Differentiating both sides in \eqref{eq04} with respect to $x$, we have 
	\[
	a'_{n,k}(x)y^2 + 2a_{n,k}(x)yy' - (n+2k)y - (n+2k)xy'= 0,
	\]
	which yields that 
	\begin{equation}
		(2a_{n,k}(x)y - (n+2k)x)y^{-1}y' = n+2k - a'_{n,k}(x)y. \label{eq05}
	\end{equation}
	Since $y>(n+2k)x/a_{n,k}(x)$, 
	if the right-hand side in \eqref{eq05} is positive, then $y'$ is also positive.
	Write the right-hand side in \eqref{eq05} as $h(x)$.
	Since 
	\begin{equation}
		f_{n,k}(0) = \frac{\sqrt{8na_{n,k}(0)}}{2a_{n,k}(0)}
		= \sqrt{\frac{2n}{a_{n,k}(0)}}
		= \sqrt{\frac{n(n-1)}{2k(n-k)}},
		\label{eq06}
	\end{equation}
	it turns out that 
	\begin{align*}
		h(0) &= n+2k - a'_{n,k}(0)f_{n,k}(0)
		= n+2k - \frac{4k(n-k)}{n-1}\sqrt{\frac{n(n-1)}{2k(n-k)}}\\
		&= n+2k - \sqrt{\frac{8nk(n-k)}{n-1}}
		= (\sqrt{n}-\sqrt{2k})^2 + \sqrt{8nk}\Bigl( 1 - \sqrt{\frac{n-k}{n-1}} \Bigr)
		> 0,
	\end{align*}
	where the assumption $n\ge3$ has been used to obtain the last inequality.
	We show that $h(x)>0$ by contradiction.
	Suppose that there exists $0<x_1<g_n(k)$ such that $h(x_1)\le0$.
	By the continuity of $h$, we can take the minimum value $0<x_2\le x_1$ such that $h(x_2)\le0$.
	It is clear that $h(x_2)=0$ (if not so, then $x_2$ would not be the minimum).
	This implies that $f'_{n,k}(x_2)=0$, since $h(x)$ is the right-hand side in \eqref{eq05}.
	Also, we find that $h(x_2-\delta)>0$ for all $0<\delta<x_2$, and thus, 
	\[
	h'(x_2) = \lim_{\delta\to+0} \frac{h(x_2-\delta)-h(x_2)}{-\delta}
	= \lim_{\delta\to+0} \frac{h(x_2-\delta)}{-\delta}
	\le 0.
	\]
	However, 
	\[
	h'(x_2) = \frac{2k(k-1)}{n-1}f_{n,k}(x_2) > 0,
	\]
	since $f'_{n,k}(x_2)=0$, $f_{n,k}(x_2)>0$, and 
	\[
	h'(x) = -a''_{n,k}(x)y - a'_{n,k}(x)y'
	= \frac{2k(k-1)}{n-1}y - a'_{n,k}(x)y'.
	\]
	By this contradiction, we conclude that $h(x)>0$ and thus, $y'>0$.
	Therefore, $f_{n,k}$ is strictly increasing.
	\par
	Proof of fact~4.
	Let $n\ge3$ be an integer.
	We show that $F_n$ is strictly increasing.
	Let $x_2>x_1>0$.
	For $i=1,2$, define the integer $n_i$ as 
	\[
	n_i = \max\{ 1\le k\le n/2 : x_i<g_n(k) \}
	= \max\{ 1\le k\le n/2 : k<g_n^{-1}(x_i) \},
	\]
	where $g_n^{-1}$ is the inverse function of $g_n$.
	By fact~1 of Lemma~\ref{lem4}, $g_n^{-1}(x_2)<g_n^{-1}(x_1)$ and thus, $n_2\le n_1$.
	By this and fact~3, 
	\[
	F_n(x_1) = \min_{1\le k\le n_1} f_{n,k}(x_1)
	< \min_{1\le k\le n_2} f_{n,k}(x_2) = F_n(x_2).
	\]
	Therefore, $F_n$ is strictly increasing.
	\par
	Proof of fact~5.
	Let $n\ge2$ be an integer.
	From \eqref{eq06}, it follows that 
	\[
	F_n(0) = \min_{1\le k\le n/2} \sqrt{\frac{n(n-1)}{2k(n-k)}}
	= \sqrt{\frac{n(n-1)}{2\lfloor{n/2}\rfloor\lceil{n/2}\rceil}}.
	\]
	Since $x\ge g_n(k)$ for all $2\le k\le n/2$ and $x\ge g_n(2)$, 
	and since $a_{n,1}(x)=4(x+1)$, 
	we obtain 
	\[
	\lim_{x\to\infty} F_n(x)
	= \lim_{x\to\infty} f_{n,1}(x)
	= \lim_{x\to\infty} \frac{(n+2)x + \sqrt{(n+2)^2x^2 + 8na_{n,1}(x)}}{2a_{n,1}(x)}
	= \frac{n+2}{4}.
	\]
	\par
	Proof of fact~6.
	Let $n\ge3$ be an integer and $1<t\le n/2$ be a real number.
	Then 
	\begin{align*}
		p_{n,t}(0) &= a_{n,t}(0)(n+2)^2/4 - 8n
		= \frac{t(n-t)}{n-1}(n+2)^2 - 8n\\
		&> (n+2)^2 - 8n
		= (n-2)^2 > 0.
	\end{align*}
	Also, the leading coefficient of $p_{n,t}(x)$ is $t(1-t)(n+2)^2/4(n-1)<0$.
	Therefore, the quadratic equation $p_{n,t}(x)=0$ has a unique positive solution $x=G_n(t)$, 
	and the other solution is negative.
	By fact~3 of Lemma~\ref{lem4}, we have $a_{n,t}(g_n(t))=0$ and thus, $p_{n,t}(g_n(t))<0$.
	From this and $g_n(t)>0$, it follows that $G_n(t)<g_n(t)$.
	\par
	Proof of fact~7.
	Let $n\ge3$ be an integer.
	By facts~4 and 5, $F_n(x)<(n+2)/4$ for all $x\ge0$.
	Thus, for all $x\ge0$, 
	\begin{equation}
		F_n(x) = \min\{ f_{n,k}(x) : 1\le k\le n/2\text{ with }x<g_n(k)\text{ and }f_{n,k}(x)<(n+2)/4 \}.
		\label{eq07}
	\end{equation}
	For all $2\le k\le n/2$ and $0\le x<g_n(k)$, we have 
	\begin{align*}
		f_{n,k}(x) < (n+2)/4
		&\overset{(a)}{\iff} a_{n,k}(x)\Bigl( \frac{n+2}{4} \Bigr)^2 - (n+2k)x\cdot\frac{n+2}{4} - 2n > 0\\
		&\iff p_{n,k}(x) > 0
		\overset{(b)}{\iff} x < G_n(k),
	\end{align*}
	where $(a)$ is derived from fact~1, \eqref{eq04}, and $a_{n,k}(x)>0$ (fact~2 of Lemma~\ref{lem4}); 
	$(b)$ is derived from fact~6 (since the leading coefficient of $p_{n,k}(x)$ is $k(1-k)(n+2)^2/4(n-1)<0$).
	The above equivalence of $f_{n,k}(x) < (n+2)/4$ and $x < G_n(k)$ is also true for $k=1$, 
	since $f_{n,1}(x)<f_{n,1}(+\infty)=(n+2)/4$ (see the definition of $f_{n,1}$ and fact~3) and $G_n(1)=\infty$.
	Therefore, the right-hand side in \eqref{eq07} is equal to 
	\[
	\min\{ f_{n,k}(x) : 1\le k\le n/2\text{ with }x<G_n(k) \}.
	\]
	\par
	Proof of fact~8.
	Let $n\ge4$ and $2\le k\le n/2$ be integers.
	If $k\ge4$, then fact~6 of Lemma~\ref{lem5} and fact~1 of Lemma~\ref{lem4} imply that 
	\[
	G_n(k) < g_n(k) \le g_n(4)
	= \frac{2}{3}(\sqrt{(n-1)(n-4)} + n-4)
	< \frac{2}{3}(n-2 + n-4)
	< 2n-6.
	\]
	Consider the remaining cases $k=2,3$.
	Recalling the definition of $p_{n,t}$, we have 
	\[
	p_{n,k}(2n-6) = a_{n,k}(2n-6)(n+2)^2/4 - (n+2k)(n+2)(2n-6) - 8n.
	\]
	Since 
	\begin{align*}
		a_{n,2}(2n-6)
		&= 8\Bigl( \frac{n-2}{n-1}(n-2)^2 - (n-3)^2 \Bigr)\\
		&= 8\Bigl( \frac{n-2}{n-1}(n-1)(n-3) + \frac{n-2}{n-1} - (n-3)^2 \Bigr)\\
		&= 8\Bigl( (n-2)(n-3) + 1-\frac{1}{n-1} - (n-3)^2 \Bigr)
		= 8\Bigl( n-2 - \frac{1}{n-1} \Bigr)
	\end{align*}
	and 
	\begin{align*}
		a_{n,3}(2n-6)
		&= 12\Bigl( \frac{n-3}{n-1}(n-2)^2 - (n-3)^2 \Bigr)\\
		&= 12(n-3)\Bigl( \frac{(n-2)^2}{n-1} - (n-3) \Bigr)
		= \frac{12(n-3)}{n-1},
	\end{align*}
	we have 
	\begin{align*}
		p_{n,2}(2n-6)
		&= 2\Bigl( n-2 - \frac{1}{n-1} \Bigr)(n+2)^2 - 2(n+4)(n+2)(n-3) - 8n\\
		&= 2(n+2)\bigl( (n-2)(n+2) - (n+4)(n-3) \bigr) - \frac{2(n+2)^2}{n-1} - 8n\\
		&= 2(n+2)(8-n) - 2\cdot\frac{(n-1)(n+5)+9}{n-1} - 8n\\
		&< 2(n+2)(8-n) - 2(n+5) - 8n\\
		&= 2(n+2)(4-n) - 2(n-3) < 0
	\end{align*}
	and 
	\begin{align*}
		p_{n,3}(2n-6)
		&= \frac{3(n-3)}{n-1}(n+2)^2 - 2(n+6)(n+2)(n-3) - 8n\\
		&= (n-3)(n+2)\Bigl( \frac{3(n+2)}{n-1} - 2(n+6) \Bigr) - 8n\\
		&\le (n-3)(n+2)\bigl( 6 - 2(n+3) \bigr) - 8n\\
		&\le (n-3)(n+2)\cdot(-2)(n+3) - 8n < 0.
	\end{align*}
	Since the leading coefficient of $p_{n,k}(x)$ is $k(1-k)(n+2)^2/4(n-1)<0$ for $k=2,3$, 
	it follows from fact~6 and $2n-6>0$ that $G_n(2)<2n-6$ and $G_n(3)<2n-6$.
\end{proof}

Next, assuming Theorem~\ref{main0}, we prove Theorem~\ref{main3}.

\begin{proof}[Proof of Theorem~$\ref{main3}$]
	For $n=2$, the assertion follows from fact~2 of Lemma~\ref{lem5} (see also the sentences below Definition~\ref{En}).
	Let $n\ge3$ be an integer, $\epsilon>0$ be a real number, 
	and $\epsilon_1>0$ satisfy $e^{\epsilon_1}-1<(e^\epsilon-1)F_n(e^\epsilon-1)$.
	We show that $M_n^\CQ(\epsilon;J_{1/2}) > M_n^\EC(\epsilon_1;J_{1/2})$.
	Put $x=e^\epsilon-1$ and $y=(e^{\epsilon_1}-1)/x<F_n(x)$.
	By Theorem~\ref{main0}, 
	\begin{align*}
		M_n^\CQ(\epsilon;J_{1/2})
		&= M_2^\C(\epsilon;J_{1/2})
		= \frac{4(e^\epsilon-1)^2}{e^\epsilon+1}\cdot\frac{1}{e^\epsilon+1}
		= \frac{4x^2}{(x+2)^2},\\
		M_n^\EC(\epsilon_1;J_{1/2})
		&= M_n^\C(\epsilon_1;J_{1/2})
		= \frac{4(e^{\epsilon_1}-1)^2}{e^{\epsilon_1}+1}\cdot\frac{1}{n-1}\max_{1\le k\le n/2} \frac{k(n-k)}{ke^{\epsilon_1}+n-k}\\
		&= \frac{4(xy)^2}{(n-1)(xy+2)}\max_{1\le k\le n/2} \frac{k(n-k)}{kxy+n}.
	\end{align*}
	Hence, it suffices to show that for every $1\le k\le n/2$, 
	\[
	\frac{1}{(x+2)^2}
	> \frac{y^2}{(n-1)(xy+2)}\cdot\frac{k(n-k)}{kxy+n}.
	\]
	This inequality is equivalent to 
	\begin{equation}
	\begin{split}
		h(y) &\coloneqq a_{n,k}(x)y^2 - (n+2k)xy - 2n\\
		&= \frac{k(n-k)}{n-1}(x+2)^2y^2 - (xy+2)(kxy+n) < 0
		\label{eq08}
	\end{split}
	\end{equation}
	(see also \eqref{eq03} and \eqref{eq04}).
	If $a_{n,k}(x)\le0$, then inequality~\eqref{eq08} is trivial.
	If $a_{n,k}(x)>0$, then inequality~\eqref{eq08} holds 
	by the inequality $0<y<F_n(x)\le f_{n,k}(x)$ and fact~1 of Lemma~\ref{lem5}.
	Therefore, $M_n^\CQ(\epsilon;J_{1/2}) > M_n^\EC(\epsilon_1;J_{1/2})$.
	\par
	Let $\epsilon'\in\cE_n(\epsilon)$.
	Then $M_n^\EC(\epsilon';J_{1/2}) \ge M_n^\CQ(\epsilon;J_{1/2}) > M_n^\EC(\epsilon_1;J_{1/2})$.
	Since $\EC_n(\epsilon)$ is monotonically increasing in $\epsilon>0$, 
	so is $M_n^\EC(\epsilon;J_{1/2})$.
	Thus, $\epsilon'>\epsilon_1$, i.e., 
	\[
	\frac{e^{\epsilon'}-1}{e^\epsilon-1}
	> \frac{e^{\epsilon_1}-1}{e^\epsilon-1}.
	\]
	Since $\epsilon_1>0$ is arbitrary as long as $e^{\epsilon_1}-1<(e^\epsilon-1)F_n(e^\epsilon-1)$, 
	we obtain 
	\[
	\frac{e^{\epsilon'}-1}{e^\epsilon-1}
	\ge F_n(e^\epsilon-1).
	\]
\end{proof}

Theorem~\ref{main3} and Lemma~\ref{lem5} yield the following corollaries.

\begin{corollary}
	For all $n\ge3$, $\epsilon>0$ and $\epsilon'\in\cE_n(\epsilon)$, 
	\[
	\frac{e^{\epsilon'}-1}{e^\epsilon-1}
	> \sqrt{\frac{n(n-1)}{2\lfloor{n/2}\rfloor\lceil{n/2}\rceil}}.
	\]
\end{corollary}
\begin{proof}
	The assertion follows from Theorem~\ref{main3}, facts~4 and 5 of Lemma~\ref{lem5} immediately.
\end{proof}

\begin{corollary}
	For all $n\ge3$, $\epsilon>0$ with $e^\epsilon\ge2n-5$, and $\epsilon'\in\cE_n(\epsilon)$, 
	\begin{equation*}
		\frac{e^{\epsilon'}-1}{e^\epsilon-1}
		\ge f_{n,1}(e^\epsilon-1)
		> \frac{n+2}{4}(1-e^{-\epsilon}),
	\end{equation*}
	where $f_{n,1}$ is defined in Theorem~$\ref{main3}$.
\end{corollary}
\begin{proof}
	If $n=3$, the assertion follows from Theorem~\ref{main3} and $F_3(x)=f_{3,1}(x)$.
	If $n\ge4$, the assertion follows from Theorem~\ref{main3}, facts~7 and 8 of Lemma~\ref{lem5}.
\end{proof}

\section{Proof of Theorem~$\ref{main0}$}\label{proof0}

In this section, we prove Theorem~\ref{main0}.
First, let us begin with the classical optimization, 
for which we need the following definition and lemma \cite[Theorem~4]{Kairouz}.

\begin{definition}[Sublinear function]
	We say that a function $\phi\colon (0,\infty)^n\to\mathbb{R}$ is sublinear 
	if $\phi(x+y)\le\phi(x)+\phi(y)$ and $\phi(\alpha x)=\alpha\phi(x)$ for all $x,y\in(0,\infty)^n$ and $\alpha>0$.
\end{definition}

\begin{lemma}\label{lem3}
	Let $\Phi_\C$ be a real-valued function of $n$ probability vectors with the following condition: 
	there exists a sublinear function $\phi\colon (0,\infty)^n\to\mathbb{R}$ such that 
	\begin{equation}
		\Phi_\C(p_1,\ldots,p_n) = \sum_{p_1(k),\ldots,p_n(k)>0} \phi(p_1(k),\ldots,p_n(k)),
		\label{eq-sum-subl}
	\end{equation}
	where the above sum is taken all over $k$ with $p_1(k),\ldots,p_n(k)>0$.
	Then, for all $\epsilon>0$ and $n\ge2$, 
	\begin{align*}
		&\quad \sup_{(p_i)_{i=1}^n\in\C_n(\epsilon)} \Phi_\C(p_1,\ldots,p_n)\\
		&= \max\biggl\{ \sum_{v\in\cS_n(\epsilon)} \phi(v(1),\ldots,v(n))\alpha_v :
		\begin{array}{c}
			\sum_{v\in\cS_n(\epsilon)} \alpha_v v=\mathbf{1}_n,\\
			\forall v\in\cS_n(\epsilon),\ \alpha_v\ge0
		\end{array}
		\biggr\},
	\end{align*}
	where $\cS_n(\epsilon) \coloneqq \{1,e^\epsilon\}^n$.
\end{lemma}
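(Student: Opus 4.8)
The plan is to translate the $\epsilon$-DP condition into a column-by-column statement and then into convex geometry of a polyhedral cone. For an $n$-tuple $(p_i)_{i=1}^n$ of probability vectors on a common outcome set, write the $k$-th column as $c_k=(p_1(k),\ldots,p_n(k))$; then the tuple consists of probability vectors exactly when $\sum_k c_k=\mathbf 1_n$, and it is $\epsilon$-DP exactly when every column lies in the cone $\cC_\epsilon\coloneqq\{c\in[0,\infty)^n:c(i)\le e^\epsilon c(j)\ \text{for all}\ i,j\}$. A first observation is that $\cC_\epsilon\setminus\{0\}\subseteq(0,\infty)^n$ (a vanishing coordinate forces every coordinate to vanish), so for an $\epsilon$-DP tuple each column is either $0$ or strictly positive, and the summation in \eqref{eq-sum-subl} runs over exactly the indices $k$ with $c_k\not=0$; this is what lets us apply $\phi$, which lives only on $(0,\infty)^n$.

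The key step I would prove is that $\cC_\epsilon=\operatorname{cone}(\cS_n(\epsilon))$, i.e.\ every $c\in\cC_\epsilon$ is a nonnegative combination of the $2^n$ vectors $v\in\cS_n(\epsilon)=\{1,e^\epsilon\}^n$. The inclusion $\cS_n(\epsilon)\subseteq\cC_\epsilon$ is clear. For the converse, I would write a prospective generator as ``$e^\epsilon$ on $T$, $1$ off $T$'' for $T\subseteq\{1,\ldots,n\}$ and look for coefficients $\alpha_T\ge0$ with $\sum_T\alpha_T v_T=c$; setting $A=\sum_T\alpha_T$ this reduces to prescribing a nonnegative weight on subsets whose total mass is $A$ and whose singleton marginals are $d(i)=(c(i)-A)/(e^\epsilon-1)$. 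The requirements $0\le d(i)\le A$ are equivalent to $e^{-\epsilon}\max_i c(i)\le A\le\min_i c(i)$, a nonempty range precisely because $c\in\cC_\epsilon$; having fixed such an $A$, one produces the $\alpha_T$ explicitly by a layer-cake construction on the nested sets obtained after sorting the $d(i)$, throwing the leftover mass $A-\max_i d(i)\ge0$ onto $T=\emptyset$. (Alternatively, one checks directly that the extreme rays of $\cC_\epsilon$ are exactly the rays through those $v\in\cS_n(\epsilon)$ having a coordinate equal to $1$ and one equal to $e^\epsilon$.)

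Granting this, the two halves of the identity are straightforward. For ``$\ge$'': from a feasible $(\alpha_v)$ I build an $\epsilon$-DP tuple on outcome set $\cS_n(\epsilon)$ by $p_i(v)=\alpha_v v(i)$; its columns $\alpha_v v$ lie in $\cC_\epsilon$ and $\sum_v p_i(v)=1$, and since every $v$ is strictly positive, homogeneity of $\phi$ gives $\Phi_\C(p_1,\ldots,p_n)=\sum_{v:\alpha_v>0}\phi(\alpha_v v)=\sum_v\alpha_v\phi(v)$. For ``$\le$'': given $(p_i)_{i=1}^n\in\C_n(\epsilon)$, decompose each nonzero column as $c_k=\sum_v\beta_{k,v}v$ with $\beta_{k,v}\ge0$; finite subadditivity together with homogeneity of $\phi$ on $(0,\infty)^n$ yields $\phi(c_k)\le\sum_v\beta_{k,v}\phi(v)$, so $\Phi_\C(p_1,\ldots,p_n)\le\sum_v\alpha_v\phi(v)$ with $\alpha_v\coloneqq\sum_k\beta_{k,v}\ge0$ and $\sum_v\alpha_v v=\sum_k c_k=\mathbf 1_n$. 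Finally I would note that the feasible set $\{\alpha\ge0:\sum_v\alpha_v v=\mathbf 1_n\}$ is compact — nonempty because $\mathbf 1_n\in\cC_\epsilon=\operatorname{cone}(\cS_n(\epsilon))$, and bounded because $\sum_v\alpha_v\|v\|_1=n$ while $\|v\|_1\ge n$ — so the linear functional $\alpha\mapsto\sum_v\phi(v)\alpha_v$ does attain its maximum there, which licenses writing ``$\max$'' in the statement.

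The main obstacle is the cone-generation claim $\cC_\epsilon=\operatorname{cone}(\cS_n(\epsilon))$: one must verify that the layer-cake coefficients are genuinely nonnegative and reproduce $c$, and be careful with degenerate behaviour (the construction uses $e^\epsilon>1$, valid since $\epsilon>0$). Everything after that is bookkeeping, the only recurring subtlety being that $\phi$ is defined solely on $(0,\infty)^n$, so zero columns and zero coefficients must be discarded before subadditivity and homogeneity are invoked.
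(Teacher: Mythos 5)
Your proof is correct, and all the delicate points are handled: the observation that a nonzero column of an $\epsilon$-DP tuple is strictly positive (so $\phi$ is only ever evaluated on $(0,\infty)^n$), the verification that the constraints $0\le d(i)\le A$ have a nonempty solution range exactly when the column satisfies $\max_i c(i)\le e^\epsilon\min_i c(i)$, the layer-cake construction with the leftover mass on $T=\emptyset$, and the compactness argument justifying the ``$\max$'' on the right-hand side. Note that the paper itself gives no proof of this lemma --- it simply cites Theorem~4 of Kairouz, Oh and Viswanath --- so your argument is a self-contained reconstruction of that result; it follows the same underlying idea as the cited source (decomposing the columns of the mechanism into conic combinations of the ``staircase'' vectors in $\cS_n(\epsilon)=\{1,e^\epsilon\}^n$ and exploiting sublinearity in both directions), with the cone identity $\cC_\epsilon=\operatorname{cone}(\cS_n(\epsilon))$ proved directly via the sorted, nested-set weights rather than by invoking extreme-ray machinery. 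Including such a proof would make the paper more self-contained, at the cost of a page of bookkeeping that the citation avoids.
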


Many information-theoretic quantities can be expressed as \eqref{eq-sum-subl}.
Such examples are relative entropy, Fisher information, total variation distance. 
Especially, $J_\theta(p,q)$ is expressed as 
\[
J_\theta(p,q) = \sum_{p(k),q(k)>0} f_\theta(p(k),q(k)),
\]
where the above sum is taken all over $k$ with $p(k),q(k)>0$, 
and the function $f_\theta$ defined in Theorem~\ref{main0} is sublinear.
We now prove the following lemma by using Lemma~\ref{lem3}.

\begin{lemma}\label{maxC}
	Let $\psi\colon (0,\infty)^2\to\mathbb{R}$ be a sublinear function with $\psi(1,1)=0$, 
	and $\Psi$ be the function $\Psi(p,q) = \sum_{p(k),q(k)>0} \psi(p(k),q(k))$ of two probability vectors.
	Then, for all $\epsilon>0$ and $n\ge2$, 
	\begin{align*}
		M_n^\C(\epsilon;\Psi) &\coloneqq \sup_{(p_i)_{i=1}^n\in\C_n(\epsilon)} \min_{i\not=j} \Psi(p_i,p_j)
		= \sup_{(p_i)_{i=1}^n\in\C_n(\epsilon)} \avg_{i\not=j} \Psi(p_i,p_j)\\
		&= \frac{\psi(e^\epsilon,1)+\psi(1,e^\epsilon)}{n-1}\max_{1\le k\le n/2} \frac{k(n-k)}{ke^\epsilon+n-k}.
	\end{align*}
\end{lemma}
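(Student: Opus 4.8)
The plan is to split the claim into three parts: the lower bound on $M_n^\C(\epsilon;\Psi)$ via an explicit construction, the identification of the averaged supremum with the displayed closed form via Lemma~\ref{lem3}, and the easy observation that $\min_{i\neq j}\le\avg_{i\neq j}$ sandwiches everything together. First I would observe the trivial chain
\[
\sup_{(p_i)\in\C_n(\epsilon)}\min_{i\neq j}\Psi(p_i,p_j)
\le \sup_{(p_i)\in\C_n(\epsilon)}\avg_{i\neq j}\Psi(p_i,p_j),
\]
so it suffices to prove that the averaged supremum equals the closed form, and then to exhibit a single $\epsilon$-DP tuple achieving that value with \emph{all} pairwise values $\Psi(p_i,p_j)$ equal (so that $\min=\avg$ at the optimizer).

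For the averaged quantity, the key step is to apply Lemma~\ref{lem3} to the function $\Phi_\C(p_1,\dots,p_n)=\avg_{i\neq j}\Psi(p_i,p_j)$, which has the form \eqref{eq-sum-subl} with the sublinear function $\phi(x_1,\dots,x_n)=\avg_{i\neq j}\psi(x_i,x_j)$ on $(0,\infty)^n$ (sublinearity and positive homogeneity are inherited from $\psi$ coordinatewise). Lemma~\ref{lem3} then reduces the supremum to a maximum over weightings $\alpha_v\ge0$, $v\in\{1,e^\epsilon\}^n$, with $\sum_v\alpha_v v=\mathbf 1_n$, of $\sum_v\phi(v)\alpha_v$. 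Because $\psi(1,1)=0$ and $\psi$ is positively homogeneous, for a vertex $v$ with exactly $k$ coordinates equal to $e^\epsilon$ and $n-k$ equal to $1$ one computes $\phi(v)=\frac{1}{n(n-1)}\big(k(n-k)\psi(e^\epsilon,1)+(n-k)k\psi(1,e^\epsilon)\big)=\frac{k(n-k)}{n(n-1)}\big(\psi(e^\epsilon,1)+\psi(1,e^\epsilon)\big)$, i.e. $\phi(v)$ depends on $v$ only through $k=k(v)$. The remaining optimization is therefore a linear program whose objective depends only on how the mass $\alpha_v$ is distributed across the "levels" $k=0,1,\dots,n$; by symmetry one may assume $\alpha_v$ depends only on $k(v)$, and the constraint $\sum_v\alpha_v v=\mathbf 1_n$ collapses (by symmetry of coordinates) to the single scalar equation $\sum_{k=0}^n \binom{n-1}{k-1}e^\epsilon\,\beta_k + \binom{n-1}{k}\,\beta_k=1$ type relation on the level-masses $\beta_k$. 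A cleaner route I would actually take: note that any feasible $(\alpha_v)$ can be replaced, without changing the constraint or decreasing the objective, by one supported on vertices of a \emph{single} level $k$ (since the objective coefficient per unit of "$\mathbf 1_n$-budget" is $\phi(v)/(\text{average coordinate of }v)=\frac{k(n-k)}{n(n-1)}\cdot\frac{\psi(e^\epsilon,1)+\psi(1,e^\epsilon)}{\frac{1}{n}(k e^\epsilon + n-k)}$, which is maximized by choosing the best single $k$). Carrying this out gives the objective value $\frac{\psi(e^\epsilon,1)+\psi(1,e^\epsilon)}{n-1}\max_{1\le k\le n}\frac{k(n-k)}{ke^\epsilon+n-k}$, and since $\frac{k(n-k)}{ke^\epsilon+n-k}$ is symmetric under $k\leftrightarrow n-k$ only when $e^\epsilon=1$ but in general one checks the maximizing $k$ lies in $\{1,\dots,\lceil n/2\rceil\}$ (the numerator $k(n-k)$ is maximized near $n/2$ while the denominator grows with $k$, pushing the optimum to the lower half), which justifies restricting to $1\le k\le n/2$.

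Finally, for the lower bound matching $M_n^\C(\epsilon;\Psi)$ I would take the optimizing level $k^\star$ and build $(p_i)_{i=1}^n$ over an alphabet indexed by the $\binom{n}{k^\star}$ vertices $v$ of level $k^\star$, putting mass proportional to $v(i)$ on symbol $v$ in $p_i$, normalized; the $\epsilon$-DP condition $p_i\le e^\epsilon p_j$ holds coordinatewise because each ratio $v(i)/v(j)\in\{e^{-\epsilon},1,e^\epsilon\}$, and by the full symmetry of this construction $\Psi(p_i,p_j)$ takes the same value for every pair $i\neq j$, so $\min_{i\neq j}\Psi(p_i,p_j)=\avg_{i\neq j}\Psi(p_i,p_j)$ equals the closed form. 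I expect the \textbf{main obstacle} to be the linear-programming step: showing rigorously that an optimal dual/primal solution concentrates on a single level $k$ (rather than mixing levels) and that the per-budget objective ratio is indeed maximized there — this requires either a careful exchange argument on the LP or invoking the structure of $\cS_n(\epsilon)$ and convexity, and is the one place where the "obvious" symmetry reduction needs genuine justification. The verification that $f_\theta$ (hence the relevant $\psi$) is sublinear and that $J_\theta(p,q)=\sum f_\theta(p(k),q(k))$ is routine and I would relegate it to a sentence.
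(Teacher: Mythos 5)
Your proposal is correct and follows essentially the same route as the paper: the bound $\min\le\avg$, Lemma~\ref{lem3} applied to the (sum or average of) pairwise $\Psi$, grouping the vertices of $\cS_n(\epsilon)$ by the number $k$ of coordinates equal to $e^\epsilon$, and the symmetric single-level construction $p_i(v)\propto v(i)$ achieving equality with all pairwise values equal. The LP step you flag as the main obstacle is handled exactly by the per-budget idea you describe: relax the vector constraint $\sum_v\alpha_v v=\mathbf{1}_n$ to its coordinate sum $\sum_v\alpha_v(ke^\epsilon+n-k)=n$, which immediately bounds the objective by the best single level, and the explicit construction closes the gap (the restriction to $1\le k\le n/2$ follows since replacing $k$ by $n-k$ keeps the numerator and enlarges the denominator when $k>n/2$).
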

\begin{proof}
	Let $\epsilon>0$ be a real number and $n\ge2$ be an integer.
	The following inequality holds: 
	\begin{equation}
		M_n^\C(\epsilon;\Psi)
		\le \sup_{(p_i)_{i=1}^n\in\C_n(\epsilon)} \avg_{i\not=j} \Psi(p_i,p_j).
		\label{eq09}
	\end{equation}
	Recall the definition $\cS_n(\epsilon)=\{1,e^\epsilon\}^n$.
	Set 
	\[
	\Phi_\C(p_1,\ldots,p_n) = \sum_{i\not=j} \Psi(p_i,p_j) \quad\text{and}\quad
	\phi(x) = \sum_{i\not=j} \psi(x(i),x(j))
	\]
	for $x\in(0,\infty)^n$.
	Lemma~\ref{lem3} yields that 
	\begin{equation}
	\begin{split}
		&\quad \sup_{(p_i)_{i=1}^n\in\C_n(\epsilon)} \sum_{i\not=j} \Psi(p_i,p_j)\\
		&= \max\biggl\{ \sum_{v\in\cS_n(\epsilon)} \sum_{i\not=j} \psi(v(i),v(j))\alpha_v :
		\begin{array}{c}
			\sum_{v\in\cS_n(\epsilon)} \alpha_v v=\mathbf{1}_n,\\
			\forall v\in\cS_n(\epsilon),\ \alpha_v\ge0
		\end{array}
		\biggr\}.
	\end{split}\label{eq10}
	\end{equation}
	Consider the partition of $\cS_n(\epsilon)$ into the $n+1$ subsets 
	\[
	\cS_{n,k}(\epsilon) \coloneqq \{ v\in\cS_n(\epsilon) : \text{the number of $i$ with $v(i)=e^\epsilon$ is }k \}
	\quad(k=0,1,\ldots,n).
	\]
	If $v\in\cS_{n,k}(\epsilon)$, 
	then $\sum_{i\not=j} \psi(v(i),v(j)) = \bigl( \psi(e^\epsilon,1)+\psi(1,e^\epsilon) \bigr)k(n-k)$ 
	due to the assumption $\psi(1,1)=0$.
	Thus, for every $k=0,1,\ldots,n$, we have 
	\begin{align*}
		\sum_{v\in\cS_n(\epsilon)} \sum_{i\not=j} \psi(v(i),v(j))\alpha_v
		&= \sum_{k=0}^n \sum_{v\in\cS_{n,k}(\epsilon)} \sum_{i\not=j} \psi(v(i),v(j))\alpha_v\\
		&= \sum_{k=0}^n \bigl( \psi(e^\epsilon,1)+\psi(1,e^\epsilon) \bigr)k(n-k)\beta_k,
	\end{align*}
	where $\beta_k=\sum_{v\in\cS_{n,k}(\epsilon)} \alpha_v$.
	Since the equality $\sum_{v\in\cS_n(\epsilon)} \alpha_v v=\mathbf{1}_n$ yields 
	\[
	\sum_{k=0}^n (ke^\epsilon+n-k)\beta_k
	= \sum_{v\in\cS_n(\epsilon)} \alpha_v\braket{\mathbf{1}_n|v}
	= \braket{\mathbf{1}_n|\mathbf{1}_n} = n,
	\]
	the right-hand side in \eqref{eq10} is bounded above by 
	\begin{align}
		&\quad \max\biggl\{ \sum_{k=0}^n \bigl( \psi(e^\epsilon,1)+\psi(1,e^\epsilon) \bigr)k(n-k)\beta_k :
		\begin{array}{c}
			\sum_{k=0}^n (ke^\epsilon+n-k)\beta_k=n,\\
			\beta_0,\ldots,\beta_n\ge0
		\end{array}
		\biggr\}\nonumber\\
		&= \bigl( \psi(e^\epsilon,1)+\psi(1,e^\epsilon) \bigr)n\max_{0\le k\le n} \frac{k(n-k)}{ke^\epsilon+n-k}\nonumber\\
		&= \bigl( \psi(e^\epsilon,1)+\psi(1,e^\epsilon) \bigr)n\max_{1\le k\le n/2} \frac{k(n-k)}{ke^\epsilon+n-k}. \label{eq11}
	\end{align}
	From \eqref{eq09}, \eqref{eq10} and \eqref{eq11}, it follows that 
	\begin{equation}
		M_n^\C(\epsilon;\Psi) \le \sup_{(p_i)_{i=1}^n\in\C_n(\epsilon)} \avg_{i\not=j} \Psi(p_i,p_j)
		\le \frac{\psi(e^\epsilon,1)+\psi(1,e^\epsilon)}{n-1}\max_{1\le k\le n/2} \frac{k(n-k)}{ke^\epsilon+n-k}.
		\label{eq12}
	\end{equation}
	\par
	Fix an arbitrary integer $1\le k\le n/2$.
	Let $d$ be the number of elements in $\cS_{n,k}(\epsilon)$, i.e., $d=\binom{n}{k}$.
	Then the vector space $\mathbb{R}^{\cS_{n,k}(\epsilon)}$ is isomorphic to $\mathbb{R}^d$.
	Define the probability vectors $p_1,\ldots,p_n\in\mathbb{R}^{\cS_{n,k}(\epsilon)}$ as 
	\[
	p_i(v) = \frac{v(i)}{\binom{n-1}{k-1}e^\epsilon+\binom{n-1}{k}}
	\quad(v\in\cS_{n,k}(\epsilon);\ i=1,\ldots,n).
	\]
	Then $(p_i)_{i=1}^n$ is $\epsilon$-DP, and moreover, 
	\begin{align*}
		&\quad \min_{i\not=j} \Psi(p_i,p_j)
		= \min_{i\not=j} \frac{1}{\binom{n-1}{k-1}e^\epsilon+\binom{n-1}{k}}\sum_{v\in\cS_{n,k}(\epsilon)} \psi(v(i),v(j))\\
		&= \frac{1}{\binom{n-1}{k-1}e^\epsilon+\binom{n-1}{k}}\cdot\bigl( \psi(e^\epsilon,1)+\psi(1,e^\epsilon) \bigr)\binom{n-2}{k-1}\\
		&= \frac{\psi(e^\epsilon,1)+\psi(1,e^\epsilon)}{((n-1)/(n-k))e^\epsilon+(n-1)/k}
		= \frac{\psi(e^\epsilon,1)+\psi(1,e^\epsilon)}{n-1}\cdot\frac{k(n-k)}{ke^\epsilon+n-k}.
	\end{align*}
	Since $1\le k\le n/2$ is arbitrary, 
	the inequalities in \eqref{eq12} turn to equality.
\end{proof}

Next, we consider the quantum optimization, 
for which we need the following lemmas.

\begin{lemma}\label{lem1}
	Let $n\ge2$ be an integer and $c\in[0,1]$ be a real number.
	There exists an $n$-tuple $(u_i)_{i=1}^n$ of unit vectors in $\mathbb{R}^n$ such that 
	$\braket{u_i|u_j}=c$ for all $i\not=j$.
\end{lemma}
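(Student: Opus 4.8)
The plan is to realize the desired Gram matrix explicitly. I want an $n$-tuple $(u_i)_{i=1}^n$ of unit vectors in $\mathbb{R}^n$ with $\braket{u_i|u_j}=c$ for $i\ne j$; equivalently, I want the $n\times n$ matrix $G$ with $G_{ii}=1$ and $G_{ij}=c$ for $i\ne j$ to be positive semi-definite, since any PSD matrix $G$ of rank at most $n$ is the Gram matrix of some vectors $u_1,\ldots,u_n\in\mathbb{R}^n$ (take $G=U^\top U$ and let $u_i$ be the $i$-th column of $U$). So the whole statement reduces to checking that $G=(1-c)I_n+c\,\mathbf{1}_n\mathbf{1}_n^\top$ is positive semi-definite whenever $c\in[0,1]$.

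For the spectral analysis of $G$: the all-ones vector $\mathbf{1}_n$ is an eigenvector with eigenvalue $(1-c)+cn=1+(n-1)c$, and any vector orthogonal to $\mathbf{1}_n$ is an eigenvector with eigenvalue $1-c$. Both eigenvalues are nonnegative precisely when $-\tfrac{1}{n-1}\le c\le 1$, which certainly holds for $c\in[0,1]$; in fact for $c\in[0,1]$ both eigenvalues are strictly positive unless $c=1$, where $1-c=0$. Hence $G\succeq 0$, and writing $G=U^\top U$ (for instance via the positive semi-definite square root $U=G^{1/2}$, or a Cholesky-type factorization) and taking $u_i$ to be the columns of $U$ gives the claim.

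There is essentially no obstacle here; the only mild point to keep honest is that I need the $u_i$ to live in $\mathbb{R}^n$ rather than some larger space, but that is automatic since $G$ is $n\times n$ and its square root $G^{1/2}$ is again an $n\times n$ real matrix. If one prefers an even more hands-on construction, one can instead set $u_i=\sqrt{c}\,e_0+\sqrt{1-c}\,e_i$ inside $\mathbb{R}^{n+1}=\mathrm{span}(e_0,e_1,\ldots,e_n)$, which manifestly has $\|u_i\|^2=c+(1-c)=1$ and $\braket{u_i|u_j}=c$ for $i\ne j$, and then observe that these $n$ vectors span an $n$-dimensional subspace isometric to $\mathbb{R}^n$; but the Gram-matrix argument is cleaner and gives the dimension count for free.
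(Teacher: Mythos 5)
Your proof is correct and is essentially the same as the paper's: both verify that the matrix $(1-c)I_n+c\,\mathbf{1}_n\mathbf{1}_n^\top$ is positive semi-definite and take the columns of a factorization $B^\top B$ as the desired unit vectors, with your version merely spelling out the eigenvalue check that the paper leaves implicit. No changes needed.
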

\begin{proof}
	Since the matrix $A=(1-c)I_n+c\ketbra{\mathbf{1}_n}{\mathbf{1}_n}$ is positive semi-definite and consists of real numbers, 
	there exists a real square matrix $B$ of order $n$ such that $A=B^\top B$.
	The column vectors $u_1,\ldots,u_n\in\mathbb{R}^n$ of $B$ satisfy that 
	$\braket{u_i|u_j}=1$ if $i=j$ and $\braket{u_i|u_j}=c$ if $i\not=j$.
\end{proof}

\begin{lemma}\label{lem2}
	Let $c\in[0,1)$ and $\epsilon,t>0$ be real numbers, 
	$(u_i)_{i=1}^n$ be the $n$-tuple in Lemma~$\ref{lem1}$, 
	and $\rho_1,\ldots,\rho_n$ be the density matrices defined as 
	\[
	\rho_i = \frac{1}{n+t}(I_n + t\ketbra{u_i}{u_i}) \quad (i=1,\ldots,n).
	\]
	If $D=(e^\epsilon-1)^2+4(1-c^2)e^\epsilon$ and 
	\begin{equation*}
		0 < t \le t_{\max} \coloneqq \frac{2(e^\epsilon-1)}{\sqrt{D}+1-e^\epsilon},
	\end{equation*}
	then $(\rho_i)_{i=1}^n$ is CQ $\epsilon$-DP.
\end{lemma}
\begin{proof}
	Let $i\not=j$ be positive integers less than or equal to $n$. We show that 
	\begin{equation}
		\ketbra{u_i}{u_i} - e^\epsilon\ketbra{u_j}{u_j}
		\le \frac{1-e^\epsilon+\sqrt{D}}{2}I_n. \label{eq13}
	\end{equation}
	Take an orthonormal system $(e_i)_{i=1}^2$ of $\mathbb{C}^n$ such that 
	$u_i=e_1$, $u_j=\alpha e_1+\beta e_2$, $\alpha=c$ and $\beta=\sqrt{1-\alpha^2}$.
	Then the matrix $\ketbra{u_i}{u_i} - e^\epsilon\ketbra{u_j}{u_j}$ can be expressed as 
	a square matrix of order $2$: 
	\begin{equation}
		\begin{bmatrix}
			1&0\\
			0&0
		\end{bmatrix}
		- e^\epsilon
		\begin{bmatrix}
			\alpha^2&\alpha\beta\\
			\alpha\beta&\beta^2
		\end{bmatrix}
		=
		\begin{bmatrix}
			1-e^\epsilon\alpha^2&-e^\epsilon\alpha\beta\\
			-e^\epsilon\alpha\beta&-e^\epsilon\beta^2
		\end{bmatrix}
		\eqqcolon A. \label{eq14}
	\end{equation}
	Since $\Tr A=1-e^\epsilon$, $\det A=-e^\epsilon\beta^2$ and $D=(\Tr A)^2-4\det A$, 
	the greatest eigenvalue of $A$ is equal to $(1-e^\epsilon+\sqrt{D})/2$.
	Therefore, inequality~\eqref{eq13} holds.
	Consequently, we obtain 
	\[
	t(\ketbra{u_i}{u_i} - e^\epsilon\ketbra{u_j}{u_j})
	\le t_{\max}\frac{1-e^\epsilon+\sqrt{D}}{2}I_n
	= (e^\epsilon-1)I_n
	\]
	for all $0<t\le t_{\max}$.
	This implies $\rho_i\le e^\epsilon\rho_j$.
\end{proof}

Recalling the definition of $M_n^{\mathrm{X}}(\epsilon;J_\theta)$, 
we have the monotonicity 
\begin{equation}
	M_2^{\mathrm{X}}(\epsilon;J_\theta) \ge M_3^{\mathrm{X}}(\epsilon;J_\theta) \ge \cdots
	\label{eq15}
\end{equation}
for $\mathrm{X}=\C,\EC,\CQ$.
This monotonicity is used below.

\begin{lemma}\label{maxCQ}
	For all $\theta\in[0,1]$, $\epsilon>0$ and $n\ge2$, 
	$M_n^\CQ(\epsilon;J_\theta) = M_2^\C(\epsilon;J_\theta)$.
\end{lemma}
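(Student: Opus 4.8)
The plan is to prove the two inequalities $M_n^\CQ(\epsilon;J_\theta)\le M_2^\C(\epsilon;J_\theta)$ and $M_n^\CQ(\epsilon;J_\theta)\ge M_2^\C(\epsilon;J_\theta)$ separately; the second is the substantial part.

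For the upper bound I would chain three facts. First, the monotonicity $M_n^\CQ(\epsilon;J_\theta)\le M_2^\CQ(\epsilon;J_\theta)$ recalled just above (concretely: for $(\rho_i)_{i=1}^n\in\CQ_n(\epsilon)$ one has $(\rho_1,\rho_2)\in\CQ_2(\epsilon)$ and $\min_{i\not=j}J_\theta(\rho_i,\rho_j)\le\min\{J_\theta(\rho_1,\rho_2),J_\theta(\rho_2,\rho_1)\}$). Second, Proposition~\ref{n=2} gives $\CQ_2(\epsilon)=\EC_2(\epsilon)$, hence $M_2^\CQ(\epsilon;J_\theta)=M_2^\EC(\epsilon;J_\theta)$. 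Third, since the RLD Fisher information $J_\theta$ is monotone for CPTP maps, so is $(\rho_1,\ldots,\rho_n)\mapsto\min_{i\not=j}J_\theta(\rho_i,\rho_j)$; hence \eqref{eq-opt} applies and yields $M_2^\EC(\epsilon;J_\theta)=M_2^\C(\epsilon;J_\theta)$. Moreover, Lemma~\ref{maxC} with $\psi=f_\theta$ (which satisfies $f_\theta(1,1)=0$) gives the explicit value $M_2^\C(\epsilon;J_\theta)=\bigl(f_\theta(e^\epsilon,1)+f_\theta(1,e^\epsilon)\bigr)/(e^\epsilon+1)=(e^\epsilon-1)^2/\bigl(e^\epsilon+\theta(1-\theta)(e^\epsilon-1)^2\bigr)$, which I will match against in the lower bound.

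For the lower bound it suffices, given the upper bound, to produce $n$-tuples in $\CQ_n(\epsilon)$ whose value $\min_{i\not=j}J_\theta(\rho_i,\rho_j)$ approaches $M_2^\C(\epsilon;J_\theta)$. I would take, for $c\in[0,1)$, the tuple $(\rho_i)_{i=1}^n$ of Lemma~\ref{lem2} with $t=t_{\max}$; by that lemma it is CQ $\epsilon$-DP, and each $\rho_i\ge\frac{1}{n+t}I_n$ has full rank, so every $J_\theta(\rho_i,\rho_j)$ is defined. Since all pairs $(u_i,u_j)$ with $i\not=j$ have the same Gram matrix, there is a unitary carrying $(\rho_i,\rho_j)$ to $(\rho_1,\rho_2)$, and $J_\theta$ is unitarily invariant, so $\min_{i\not=j}J_\theta(\rho_i,\rho_j)=J_\theta(\rho_1,\rho_2)$. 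To evaluate this I would restrict to the two-dimensional subspace $V=\operatorname{span}\{u_1,u_2\}$: on $V^\perp$ both $\rho_1$ and $\rho_2$ equal $\frac{1}{n+t}I$, so $J_\theta(\rho_1,\rho_2)=\Tr\bigl((\rho_1-\rho_2)^2((1-\theta)\rho_1+\theta\rho_2)^{-1}\bigr)$ only involves $V$; in an orthonormal basis of $V$ adapted to $u_1,u_2$ one computes $(\rho_1-\rho_2)^2|_V=\bigl(\tfrac{t}{n+t}\bigr)^2(1-c^2)I_2$ together with an explicit $2\times2$ form of $((1-\theta)\rho_1+\theta\rho_2)|_V$, whose trace and determinant give the closed expression
\[
J_\theta(\rho_1,\rho_2)=\frac{t^2(1-c^2)(2+t)}{(n+t)\bigl(1+t+t^2\theta(1-\theta)(1-c^2)\bigr)}.
\]
Finally I would let $c\to1^-$. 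From the definition of $t_{\max}$ (and $D\to(e^\epsilon-1)^2$) one gets $t_{\max}\to\infty$ and $t_{\max}(1-c^2)\to(e^\epsilon-1)^2/e^\epsilon$; feeding this into the closed expression, the $n$ inside $n+t$ becomes asymptotically irrelevant and $J_\theta(\rho_1,\rho_2)\to(e^\epsilon-1)^2/\bigl(e^\epsilon+\theta(1-\theta)(e^\epsilon-1)^2\bigr)=M_2^\C(\epsilon;J_\theta)$, whence $M_n^\CQ(\epsilon;J_\theta)\ge M_2^\C(\epsilon;J_\theta)$; combining with the upper bound finishes the proof. The main obstacle is precisely this lower bound: reducing $J_\theta(\rho_1,\rho_2)$ to the $2\times2$ computation cleanly, and then controlling the $c\to1$ asymptotics of $t_{\max}$ sharply enough that the limit is exactly $M_2^\C(\epsilon;J_\theta)$ rather than some value strictly below it.
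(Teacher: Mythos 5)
Your proposal is correct and follows essentially the same route as the paper: the upper bound via $M_n^\CQ\le M_2^\CQ=M_2^\EC=M_2^\C$ (Proposition~\ref{n=2} plus monotonicity of $J_\theta$ and Lemma~\ref{maxC}), and the lower bound via the tuples of Lemma~\ref{lem2} with $t=t_{\max}$, the same $2\times2$ reduction yielding the closed form \eqref{eq07}, and the limit $c\to1$ with $t_{\max}\to\infty$, $t_{\max}(1-c^2)\to(e^\epsilon-1)^2/e^\epsilon$. The asymptotic bookkeeping you flag as the main obstacle indeed works out exactly as you state, matching the paper's computation.
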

\begin{proof}
	Let $\theta\in[0,1]$, $c\in[0,1)$ and $\epsilon,t>0$ be real numbers, 
	and $n\ge2$ be an integer.
	Take a CQ $\epsilon$-DP $n$-tuple $(\rho_i)_{i=1}^n$ in Lemma~\ref{lem2}, i.e., 
	\[
	\rho_i = \frac{1}{n+t}(I_n + t\ketbra{u_i}{u_i}),\quad
	u_i\in\mathbb{R}^n \quad (i=1,\ldots,n),
	\]
	$\braket{u_i|u_j}=1$ if $i=j$ and $\braket{u_i|u_j}=c$ if $i\not=j$.
	Fix $i\not=j$ arbitrarily.
	The matrix $\ketbra{u_i}{u_i}-\ketbra{u_j}{u_j}$ can be expressed as 
	a square matrix of order $2$ in the same way as \eqref{eq14}: 
	\[
	\begin{bmatrix}
		1&0\\
		0&0
	\end{bmatrix}
	-
	\begin{bmatrix}
		\alpha^2&\alpha\beta\\
		\alpha\beta&\beta^2
	\end{bmatrix}
	=
	\begin{bmatrix}
		\beta^2&-\alpha\beta\\
		-\alpha\beta&-\beta^2
	\end{bmatrix}
	= -\beta
	\begin{bmatrix}
		-\beta&\alpha\\
		\alpha&\beta
	\end{bmatrix},
	\]
	where $\alpha=c$ and $\beta=\sqrt{1-\alpha^2}$.
	Moreover, $(\ketbra{u_i}{u_i}-\ketbra{u_j}{u_j})^2$ is expressed as 
	\[
	\beta^2
	\begin{bmatrix}
		-\beta&\alpha\\
		\alpha&\beta
	\end{bmatrix}^2
	= \beta^2I_2.
	\]
	Denote by $\lambda_1$ and $\lambda_2$ two eigenvalues of the matrix 
	\begin{equation*}
		A \coloneqq I_2+t\Bigl(
		(1-\theta)
		\begin{bmatrix}
			1&0\\
			0&0
		\end{bmatrix}
		+ \theta
		\begin{bmatrix}
			\alpha^2&\alpha\beta\\
			\alpha\beta&\beta^2
		\end{bmatrix}
		\Bigr)
		=
		\begin{bmatrix}
			1+t(1-\theta\beta^2)&t\theta\alpha\beta\\
			t\theta\alpha\beta&1+t\theta\beta^2
		\end{bmatrix}.
	\end{equation*}
	It follows that 
	\begin{align*}
		&\quad J_\theta(\rho_i,\rho_j)
		= \frac{t^2}{n+t}\Tr(\ketbra{u_i}{u_i}-\ketbra{u_j}{u_j})^2\Bigl( I_n+t\bigl( (1-\theta)\ketbra{u_i}{u_i}+\theta\ketbra{u_j}{u_j} \bigr) \Bigr)^{-1}\\
		&= \frac{t^2}{n+t}\beta^2\Tr A^{-1}
		= \frac{t^2}{n+t}\beta^2(1/\lambda_1+1/\lambda_2)
		= \frac{t^2}{n+t}\beta^2\frac{\lambda_1 + \lambda_2}{\lambda_1\lambda_2}
		= \frac{t^2}{n+t}\beta^2\frac{\Tr A}{\det A}.
	\end{align*}
	Since $\Tr A=2+t$ and 
	\begin{align*}
		\det A &= \bigl( 1+t(1-\theta\beta^2) \bigr)(1+t\theta\beta^2) - (t\theta\alpha\beta)^2\\
		&= 1 + t + t^2(1-\theta\beta^2)\theta\beta^2 - (t\theta\beta)^2\alpha^2\\
		&= 1 + t + t^2\theta\beta^2 - (t\theta\beta)^2
		= 1 + t + t^2\theta(1-\theta)\beta^2\\
		&= 1 + t + t^2\theta(1-\theta)(1-c^2),
	\end{align*}
	we have 
	\begin{equation}
		J_\theta(\rho_i,\rho_j)
		= \frac{t^2}{n+t}\beta^2\frac{\Tr A}{\det A}
		= \frac{t^2}{n+t}\cdot(1-c^2)\cdot\frac{2+t}{1 + t + t^2\theta(1-\theta)(1-c^2)}.
		\label{eq16}
	\end{equation}
	\par
	Finally, putting $t=t_{\max}$, we take the limit $c\to1-0$.
	Set $s=e^\epsilon-1>0$.
	Then $D=s^2+4(1-c^2)e^\epsilon$ and 
	\[
	t_{\max} = \frac{2s}{\sqrt{D}-s}
	= \frac{2s(\sqrt{D}+s)}{D-s^2}
	= \frac{s(\sqrt{D}+s)}{2(1-c^2)e^\epsilon}.
	\]
	Thus, the positive number $t=t_{\max}$ diverges to $+\infty$ as $c\to1-0$, 
	and moreover, 
	\begin{gather*}
		\lim_{c\to1-0} \frac{t}{n+t} = 1,\quad
		\lim_{c\to1-0} (1-c^2)t = \frac{s^2}{e^\epsilon},\\
		\begin{split}
			&\quad \lim_{c\to1-0} \frac{2+t}{1 + t + t^2\theta(1-\theta)(1-c^2)}
			= \lim_{c\to1-0} \frac{t}{t + t^2\theta(1-\theta)(1-c^2)}\\
			&= \lim_{c\to1-0} \frac{1}{1 + t\theta(1-\theta)(1-c^2)}
			= \frac{1}{1+\theta(1-\theta)s^2/e^\epsilon}
			= \frac{e^\epsilon}{e^\epsilon+\theta(1-\theta)s^2}.
		\end{split}
	\end{gather*}
	Since Proposition~\ref{n=2} and Lemma~\ref{maxC} yield that 
	\begin{align*}
		&\quad M_2^\CQ(\epsilon;J_\theta)
		= M_2^\C(\epsilon;J_\theta)
		= \frac{f_\theta(e^\epsilon,1)+f_\theta(1,e^\epsilon)}{e^\epsilon+1}\\
		&= \frac{1}{e^\epsilon+1}\cdot\frac{(e^\epsilon-1)^2(e^\epsilon+1)}{((1-\theta)e^\epsilon+\theta)(\theta e^\epsilon+1-\theta)}
		= \frac{s^2}{((1-\theta)s+1)(\theta s+1)},
	\end{align*}
	it turns out that 
	\begin{align*}
		&\quad M_2^\C(\epsilon;J_\theta) = M_2^\CQ(\epsilon;J_\theta) \ge M_n^\CQ(\epsilon;J_\theta)
		\ge \lim_{c\to1-0} J_\theta(\rho_1,\rho_2)\\
		&= \lim_{c\to1-0} \frac{t^2}{n+t}\cdot(1-c^2)\cdot\frac{2+t}{1 + t + t^2\theta(1-\theta)(1-c^2)}\\
		&= \frac{s^2}{e^\epsilon}\cdot\frac{e^\epsilon}{e^\epsilon+\theta(1-\theta)s^2}
		= \frac{s^2}{e^\epsilon+\theta(1-\theta)s^2}
		= \frac{s^2}{((1-\theta)s+1)(\theta s+1)}\\
		&= M_2^\C(\epsilon;J_\theta),
	\end{align*}
	where inequality~\eqref{eq15} has been used to obtain the first inequality.
\end{proof}

\begin{proof}[Proof of Theorem~$\ref{main0}$]
	The former assertion follows from Lemmas~\ref{maxC} and \ref{maxCQ} immediately.
	Let us show the latter assertion.
	Let $\theta\in[0,1]$ and $\epsilon>0$ be real numbers, 
	and $n\ge3$ be an integer.
	Since Lemmas~\ref{maxC}, \ref{maxCQ} and inequality~\eqref{eq15} imply that 
	\begin{align*}
		M_n^\C(\epsilon;J_\theta) &\le M_3^\C(\epsilon;J_\theta)
		= \frac{f_\theta(e^\epsilon,1)+f_\theta(1,e^\epsilon)}{2}\cdot\frac{2}{e^\epsilon+2}\\
		&= \frac{e^\epsilon+1}{e^\epsilon+2}M_2^\C(\epsilon;J_\theta)
		< M_2^\C(\epsilon;J_\theta) = M_n^\CQ(\epsilon;J_\theta),
	\end{align*}
	we obtain the latter assertion.
\end{proof}

\section{Concrete CQ $\epsilon$-DP $n$-tuples that do not lie in $\EC_n(\epsilon)$}\label{concrete}

Using Corollary~\ref{main4}, we construct CQ $\epsilon$-DP $n$-tuples that do not lie in $\EC_n(\epsilon)$.
In this section, we use the following lemmas instead of Lemmas~\ref{lem1} and \ref{lem2}.

\begin{lemma}\label{lem1'}
	Let $d\ge2$ be an integer and $c\in[1/d,1]$ be a real number.
	There exists a $(d+1)$-tuple $(u_i)_{i=1}^{d+1}$ of unit vectors in $\mathbb{C}^d$ such that 
	$\abs{\braket{u_i|u_j}}=c$ for all $i\not=j$.
\end{lemma}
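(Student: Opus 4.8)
The plan is to construct the $(d+1)$ unit vectors explicitly via a Gram matrix argument, mimicking the proof of Lemma~\ref{lem1} but now allowing complex phases so that the constraint $|\braket{u_i|u_j}|=c$ (rather than $\braket{u_i|u_j}=c$) can be met in dimension $d$ instead of $d+1$. First I would write down the candidate Gram matrix $G = (1-c)I_{d+1} + c\,\omega\omega^\ast$ for a suitable unit vector $\omega\in\mathbb{C}^{d+1}$ whose entries are unimodular (e.g.\ roots of unity or characters of $\mathbb{Z}/(d+1)$, so that $\omega_i\bar\omega_j$ has modulus $1$ and $\sum_i|\omega_i|^2 = d+1$, hence $\|\omega\| = \sqrt{d+1}$, and rescale to a unit vector). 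Then $G_{ii} = 1$ and $|G_{ij}| = c$ for $i\neq j$, so $G$ is exactly the desired Gram matrix provided it is positive semi-definite of rank at most $d$.

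The key steps, in order, are as follows. (i) Verify $G$ is Hermitian with unit diagonal and off-diagonal entries of modulus $c$; this is immediate from the choice of $\omega$. (ii) Compute the eigenvalues of $G$: since $G = (1-c)I_{d+1} + c(d+1)\,\hat\omega\hat\omega^\ast$ with $\hat\omega = \omega/\sqrt{d+1}$ a unit vector, $G$ has eigenvalue $(1-c) + c(d+1) = 1 + cd$ with multiplicity one and eigenvalue $1-c$ with multiplicity $d$. (iii) Observe that $1-c \ge 0$ iff $c\le 1$ and $1+cd > 0$ always, so $G\succeq 0$; moreover when $c = 1$ the eigenvalue $1-c$ vanishes so $\operatorname{rank} G = 1 \le d$, and when $c < 1$ we have $\operatorname{rank} G = d+1$, which is too big. (iv) To force rank $\le d$ for \emph{all} $c\in[1/d,1]$, instead take $G = I_{d+1} - \tfrac{1-c\,(d+1)}{\text{?}}\cdots$; more cleanly, use the complementary construction: pick $\omega$ and set $G = I_{d+1} - (1-c)\,\big(J_{d+1}/(d+1) - \hat\omega\hat\omega^\ast\big)$-type expression engineered so that one eigenvalue is exactly $0$. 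Concretely, the right object is $G$ with eigenvalues $\{1 + cd \text{ (mult. }1),\ 1-c\text{ (mult. }d)\}$ is rank $d+1$ for $c<1$, so instead one wants the spectrum $\{0,\ \tfrac{d+1}{d}(\cdots),\dots\}$: solve $1 - c\cdot\text{(coeff)} = 0$ to kill one eigenvalue. This is precisely why the hypothesis $c\ge 1/d$ appears — it is the exact threshold making the "all mass on $d$ nonzero eigenvalues" Gram matrix positive semi-definite. (v) Once $G\succeq 0$ with $\operatorname{rank} G\le d$, factor $G = U^\ast U$ with $U$ a $d\times(d+1)$ complex matrix; its columns $u_1,\dots,u_{d+1}\in\mathbb{C}^d$ satisfy $\braket{u_i|u_j} = G_{ij}$, hence $\|u_i\| = 1$ and $|\braket{u_i|u_j}| = c$.

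So the cleanest route is: let $\zeta = e^{2\pi i/(d+1)}$, set $\omega = (1,\zeta,\zeta^2,\dots,\zeta^{d})^\top$, and define
\[
  G \;=\; (1-c)\Big(I_{d+1} - \tfrac{1}{d+1}\,\mathbf{1}_{d+1}\mathbf{1}_{d+1}^\top\Big)\;+\;\Big(c - \tfrac{1-c}{d}\Big)\cdots
\]
— more precisely, the matrix $G = \beta I_{d+1} + \gamma\,\omega\omega^\ast$ with $\beta,\gamma$ chosen so that $\beta + \gamma = 1$ (unit diagonal), $|\gamma| = c$ (off-diagonal modulus), and $\beta + \gamma(d+1) = 0$ (one eigenvalue forced to $0$), which forces $\beta = -\gamma d$, hence $-\gamma d + \gamma = 1$, i.e.\ $\gamma = -1/(d-1)$ and $c = |\gamma| = 1/(d-1)$ — that only handles one value of $c$. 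I expect the main obstacle to be getting a construction that works \emph{uniformly over the whole interval} $c\in[1/d,1]$ rather than at isolated points, and pinning down why $1/d$ is exactly the right lower endpoint. The resolution is that for $c<1$ one should \emph{not} insist on a rank-$d$ Gram matrix of the rank-one-perturbation-of-identity form; instead, when $c\ge 1/d$ the Gram matrix $G$ with diagonal $1$ and all off-diagonals equal to $c$ in modulus and arranged (via the roots-of-unity phases) to have minimal rank has smallest eigenvalue $1 - c\cdot\text{(something)}\ge 0$ precisely when $c\ge 1/d$; one then embeds into $\mathbb{C}^d$ because the rank is at most $d$. Verifying this eigenvalue computation and the rank bound — essentially a short linear-algebra argument about circulant-like matrices $\beta I + \gamma\omega\omega^\ast$ — is the one place where care is needed, but it is routine once the correct $\beta,\gamma$ are identified.
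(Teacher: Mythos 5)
Your overall frame (build a $(d+1)\times(d+1)$ Hermitian Gram matrix with unit diagonal and off-diagonal entries of modulus $c$, show it is positive semi-definite of rank at most $d$, then factor it as $B^\ast B$ with $B\in\mathbb{C}^{d\times(d+1)}$ and take the columns) is the same as the paper's, but the decisive step is missing: for a general $c\in(1/d,1)$ you never actually exhibit such a Gram matrix. The only concrete family you analyse, $G=\beta I_{d+1}+\gamma\,\omega\omega^\ast$ with $\omega$ unimodular, is Hermitian only for real $\gamma$ and has spectrum $\{\beta+\gamma(d+1),\ \beta\}$; imposing unit diagonal ($\beta+\gamma=1$), off-diagonal modulus $\abs{\gamma}=c$ and a zero eigenvalue forces either $\beta=0$ (so $c=1$) or $\beta+\gamma(d+1)=0$, which gives $\gamma=-1/d$, i.e.\ $c=1/d$ (your value $\gamma=-1/(d-1)$ is an arithmetic slip: $\beta+\gamma(d+1)=0$ gives $\beta=-\gamma(d+1)$, not $-\gamma d$). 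So this family reaches only the two endpoints of the interval, and your proposed ``resolution'' --- that roots-of-unity phases can be ``arranged to have minimal rank'' with smallest eigenvalue nonnegative exactly when $c\ge1/d$ --- is an unproved assertion rather than a construction: any off-diagonal phase pattern of the form $\gamma\,\omega_i\bar\omega_j$ is again of the rank-one-perturbation type and hence stuck with the same two-point spectrum, and you supply no other candidate.

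The paper closes exactly this gap with a continuity argument instead of an explicit phase pattern: for $z$ on the circle $\abs{z}=c$, let $A(z)$ have unit diagonal, entry $z$ above the diagonal and $\bar z$ below. Then $A(c)=(1-c)I+cJ$ ($J$ the all-ones matrix) has smallest eigenvalue $1-c\ge0$, while $A(-c)=(1+c)I-cJ$ has smallest eigenvalue $1-dc\le0$ --- this is precisely where both hypotheses $c\le1$ and $c\ge1/d$ enter --- so by the intermediate value theorem applied to the continuous function $z\mapsto\min\eig A(z)$ along the arc there is $z_0$ with $\abs{z_0}=c$ and $\min\eig A(z_0)=0$; this $A(z_0)$ is positive semi-definite of rank at most $d$ and factors as $B^\ast B$ with $B\in\mathbb{C}^{d\times(d+1)}$. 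If you want to keep your explicit-construction spirit, you would have to compute the spectrum of $A(ce^{i\theta})$ as a function of $\theta$ (messy) or produce some genuinely different phase pattern valid for all $c\in[1/d,1]$; as written, your argument proves the lemma only for $c=1$ and (after correcting the arithmetic) $c=1/d$.
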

\begin{proof}
	For $z\in\mathbb{C}$, 
	define the Hermitian matrix $A(z)=(\alpha_{i,j})$ of order $d+1$ as 
	$\alpha_{i,j}=1$ if $i=j$ and $\alpha_{i,j}=z$ if $i<j$.
	Denote by $\eig A(z)$ the set of all eigenvalues of $A(z)$.
	Then $\eig A(c)=\{1+dc,1-c\}$, $\eig A(-c)=\{1-dc,1+c\}$, 
	$\min\eig A(c)=1-c\ge0$ and $\min\eig A(-c)=1-dc\le0$.
	Since the minimum eigenvalue of $A(z)$ can be expressed as 
	\[
	\min\eig A(z) = \min_{\|u\|=1} \braket{u |A(z)| u},
	\]
	it follows that 
	\[
	\abs{\min\eig A(z) - \min\eig A(z')} \le \|A(z)-A(z')\|
	\]
	for all $z,z'\in\mathbb{C}$, where $\|\cdot\|$ denotes the operator norm.
	This shows that $\min\eig A(z)$ is continuous in $z$.
	Thus, the intermediate value theorem implies that 
	$\min\eig A(z_0)=0$ for some $z_0\in\mathbb{C}$ of magnitude $c$.
	Therefore, there exists a $d\times(d+1)$ complex matrix $B$ such that $A(z_0)=B^\ast B$.
	The column vectors $u_1,\ldots,u_{d+1}\in\mathbb{C}^d$ of $B$ satisfy that 
	$\braket{u_i|u_j}=1$ if $i=j$ and $\braket{u_i|u_j}=z_0$ if $i<j$.
\end{proof}

\begin{lemma}\label{lem2'}
	Let $c\in[1/d,1)$ and $\epsilon,t>0$ be real numbers, 
	$(u_i)_{i=1}^{d+1}$ be the $(d+1)$-tuple in Lemma~$\ref{lem1'}$, 
	and $\rho_1,\ldots,\rho_{d+1}$ be the density matrices defined as 
	\[
	\rho_i = \frac{1}{d+t}(I_d + t\ketbra{u_i}{u_i}) \quad (i=1,\ldots,d+1).
	\]
	If $D=(e^\epsilon-1)^2+4(1-c^2)e^\epsilon$ and 
	\begin{equation*}
		0 < t \le t_{\max} \coloneqq \frac{2(e^\epsilon-1)}{\sqrt{D}+1-e^\epsilon},
	\end{equation*}
	then $(\rho_i)_{i=1}^{d+1}$ is CQ $\epsilon$-DP.
\end{lemma}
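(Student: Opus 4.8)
The plan is to mimic the proof of Lemma~\ref{lem2} almost verbatim; the only new point is that the inner products $\braket{u_i|u_j}$ are now complex numbers of modulus $c$ rather than the real number $c$. First I would reduce the claim to a spectral bound: since $d+t>0$, the inequality $\rho_i\le e^\epsilon\rho_j$ is equivalent to $t\bigl(\ketbra{u_i}{u_i}-e^\epsilon\ketbra{u_j}{u_j}\bigr)\le(e^\epsilon-1)I_d$, and for $i=j$ this is trivial because $e^\epsilon>1$; so it suffices to estimate the largest eigenvalue of $H_{ij}\coloneqq\ketbra{u_i}{u_i}-e^\epsilon\ketbra{u_j}{u_j}$ for $i\not=j$.

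Next, since $c<1$ the vectors $u_i,u_j$ are linearly independent and $H_{ij}$ vanishes on the orthogonal complement of $V\coloneqq\operatorname{span}\{u_i,u_j\}$. Picking an orthonormal basis $e_1=u_i$, $e_2$ of $V$ with $u_j=\gamma e_1+\beta e_2$, where $\gamma=\braket{u_i|u_j}$, $\abs{\gamma}=c$ and $\beta=\sqrt{1-c^2}\ge0$, the matrix of $H_{ij}$ on $V$ is the Hermitian matrix
\[
A=\begin{bmatrix}1-e^\epsilon\abs{\gamma}^2 & -e^\epsilon\gamma\beta\\ -e^\epsilon\overline{\gamma}\beta & -e^\epsilon\beta^2\end{bmatrix}.
\]
I would then compute $\Tr A=1-e^\epsilon$ and
\[
\det A=-e^\epsilon\beta^2+e^{2\epsilon}\abs{\gamma}^2\beta^2-e^{2\epsilon}\gamma\overline{\gamma}\beta^2=-e^\epsilon\beta^2,
\]
the cross term cancelling precisely because $\gamma\overline{\gamma}=\abs{\gamma}^2$; thus $\Tr A$ and $\det A$ take exactly the same values as in the real case of Lemma~\ref{lem2}. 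Consequently the two eigenvalues of $A$ are real with sum $1-e^\epsilon$, product $-e^\epsilon\beta^2$, and discriminant $(\Tr A)^2-4\det A=(e^\epsilon-1)^2+4(1-c^2)e^\epsilon=D$, so the greatest is $(1-e^\epsilon+\sqrt{D})/2$; as $c<1$ we have $\sqrt{D}>e^\epsilon-1$, so this value is positive and equals the greatest eigenvalue of $H_{ij}$ as well.

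Finally I would close exactly as in Lemma~\ref{lem2}: from $H_{ij}\le\frac{1-e^\epsilon+\sqrt{D}}{2}I_d$ and $0<t\le t_{\max}$,
\[
t\bigl(\ketbra{u_i}{u_i}-e^\epsilon\ketbra{u_j}{u_j}\bigr)\le t_{\max}\,\frac{1-e^\epsilon+\sqrt{D}}{2}\,I_d=(e^\epsilon-1)I_d,
\]
which rearranges to $\rho_i\le e^\epsilon\rho_j$ for all $i,j$, i.e., $(\rho_i)_{i=1}^{d+1}$ is CQ $\epsilon$-DP. I do not expect any serious obstacle here: the argument is the real-case computation with $c$ replaced by $\gamma$, and the only things worth double-checking are the cancellation in $\det A$ (which is what makes the complex bound identical to the real one) and the inequality $\lambda_{\max}(A)\ge0$, which ensures that restricting to $V$ does not lose the dominant eigenvalue.
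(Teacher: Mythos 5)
Your proof is correct and follows essentially the same route as the paper, which simply invokes the proof of Lemma~\ref{lem2}: reduce to bounding the top eigenvalue of $\ketbra{u_i}{u_i}-e^\epsilon\ketbra{u_j}{u_j}$ on the two-dimensional span, compute $\Tr A=1-e^\epsilon$ and $\det A=-e^\epsilon\beta^2$, and conclude via $t\le t_{\max}$. Your extra checks --- the cancellation of the complex phase in $\det A$ and the positivity of $(1-e^\epsilon+\sqrt{D})/2$ so that the bound also covers the kernel of the rank-two operator --- are exactly the details the paper leaves implicit when reusing the real-case argument.
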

\begin{proof}
	See the proof of Lemma~\ref{lem2}.
\end{proof}

\begin{theorem}\label{thm1}
	Let $(\rho_i)_{i=1}^3$ be a CQ $\epsilon$-DP $3$-tuple in Lemma~$\ref{lem2'}$ with $d=2$ and $t=t_{\max}$.
	Then $(\rho_i)_{i=1}^3$ does not lie in $\EC_3(\epsilon)$.
\end{theorem}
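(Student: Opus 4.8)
The strategy is to apply Corollary~\ref{main2}: it suffices to exhibit a parameter $\theta\in[0,1]$ for which the average $\avg_{i\ne j} J_\theta(\rho_i,\rho_j)$ of the given $3$-tuple strictly exceeds the classical maximum $M_3^\C(\epsilon;J_\theta)$. By symmetry of the construction in Lemma~\ref{lem2'} (all the pairwise inner products have the same magnitude $c$), the three quantities $J_\theta(\rho_i,\rho_j)$ for $i\ne j$ should all coincide, so the average equals the common value $J_\theta(\rho_1,\rho_2)$. I would therefore compute this common value explicitly.

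First I would note that $d=2$, so $\rho_i = \tfrac{1}{2+t}(I_2 + t\ketbra{u_i}{u_i})$ lives on $\mathbb{C}^2$, with $|\braket{u_i|u_j}| = c$ for $i\ne j$. The key computation is $J_\theta(\rho_i,\rho_j)$. This parallels the calculation in the proof of Lemma~\ref{maxCQ}: working in a suitable orthonormal basis of $\mathbb{C}^2$, one finds $(\ketbra{u_i}{u_i}-\ketbra{u_j}{u_j})^2 = (1-c^2)I_2$ (the phase of $\braket{u_i|u_j}$ can be absorbed, and the real-case computation carries over up to that phase), and the $2\times 2$ matrix $A = I_2 + t((1-\theta)\ketbra{u_i}{u_i}+\theta\ketbra{u_j}{u_j})$ has $\Tr A = 2+t$ and $\det A = 1 + t + t^2\theta(1-\theta)(1-c^2)$, exactly as in \eqref{eq07}. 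Hence, with $n$ replaced by $d=2$ there,
\[
J_\theta(\rho_i,\rho_j) = \frac{t^2}{2+t}\cdot(1-c^2)\cdot\frac{2+t}{1+t+t^2\theta(1-\theta)(1-c^2)} = \frac{t^2(1-c^2)}{1+t+t^2\theta(1-\theta)(1-c^2)}.
\]
I would then substitute $t = t_{\max}$; writing $s = e^\epsilon-1$ and using $D = s^2 + 4(1-c^2)e^\epsilon$ together with the identity $t_{\max} = s(\sqrt{D}+s)/(2(1-c^2)e^\epsilon)$ from the proof of Lemma~\ref{maxCQ}, one gets $(1-c^2)t_{\max}^2$ and $1 + t_{\max}$ in closed form, and after simplification $\avg_{i\ne j} J_\theta(\rho_i,\rho_j)$ becomes an explicit function of $\epsilon$, $\theta$ and $c\in[1/2,1)$.

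Next I would compare this against $M_3^\C(\epsilon;J_\theta)$, which by Theorem~\ref{main0} (taking $n=3$, so the max is at $k=1$) equals $\tfrac{f_\theta(e^\epsilon,1)+f_\theta(1,e^\epsilon)}{2}\cdot\tfrac{2}{e^\epsilon+2} = \tfrac{e^\epsilon+1}{e^\epsilon+2}M_2^\C(\epsilon;J_\theta)$. The cleanest route is to pick $\theta$ where things are simplest — for instance $\theta=0$ (or $\theta=1$), so that $\theta(1-\theta)=0$ and the average reduces to $t_{\max}^2(1-c^2)/(1+t_{\max})$, while $M_2^\C(\epsilon;J_0) = s^2/(s+1) = (e^\epsilon-1)^2/e^\epsilon$ so $M_3^\C(\epsilon;J_0) = \tfrac{e^\epsilon+1}{e^\epsilon+2}\cdot\tfrac{(e^\epsilon-1)^2}{e^\epsilon}$. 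Then I would show that for the specific admissible choice of $c$ dictated by Lemma~\ref{lem1'} with $d=2$ (where $z_0$ has magnitude $c$ and $c$ ranges over $[1/2,1)$), and in particular in the limit $c\to 1$ where $\avg_{i\ne j} J_\theta(\rho_i,\rho_j) \to M_2^\C(\epsilon;J_\theta) > M_3^\C(\epsilon;J_\theta)$, the strict inequality $M_3^\C(\epsilon;J_\theta) < \avg_{i\ne j} J_\theta(\rho_i,\rho_j)$ holds. By continuity it then persists for $c$ close enough to $1$; and since the statement fixes a single $3$-tuple (presumably one should read off which $c$, or note the construction works for $c$ near $1$), one concludes via Corollary~\ref{main2}.

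\textbf{Main obstacle.} The genuinely delicate point is verifying that the $J_\theta$ computation from the proof of Lemma~\ref{maxCQ} transfers to the present setting, where the $u_i$ are complex unit vectors in $\mathbb{C}^2$ with $\braket{u_i|u_j} = z_0$ a complex number of modulus $c$ rather than the real number $c$. One must check that the relevant $2\times 2$ reductions — the value of $(\ketbra{u_i}{u_i}-\ketbra{u_j}{u_j})^2$ and the trace/determinant of $A$ — depend only on $|\braket{u_i|u_j}| = c$ and not on the phase of $z_0$; this is true because a diagonal unitary rephasing sends the complex off-diagonal entries to real ones without changing any of the scalar invariants involved, but it should be stated carefully. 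The secondary point is bookkeeping: confirming that $t = t_{\max}$ is indeed CQ $\epsilon$-DP admissible here (this is exactly Lemma~\ref{lem2'}) and that the chosen $c$ lies in the required range $[1/2,1)$ so that Lemma~\ref{lem1'} applies. Once those are in place, the final inequality is a short estimate, cleanest at $\theta=0$ or in the limit $c\to 1$.
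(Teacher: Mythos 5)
Your overall strategy is the paper's: apply Corollary~\ref{main2}, transfer the computation \eqref{eq07} (with $n$ replaced by the dimension $d=2$) to get $J_\theta(\rho_i,\rho_j)=\frac{t^2(1-c^2)}{1+t+t^2\theta(1-\theta)(1-c^2)}$, and compare with $M_3^\C(\epsilon;J_\theta)$; your remarks about the complex phase of $\braket{u_i|u_j}=z_0$ being removable by a diagonal rephasing are correct and in fact address a point the paper glosses over. The genuine gap is in the final comparison. The theorem quantifies over every tuple produced by Lemma~\ref{lem2'} with $d=2$ and $t=t_{\max}$, i.e.\ over every $c\in[1/2,1)$, but your argument establishes the strict inequality only in the limit $c\to1$ and then ``by continuity for $c$ close enough to $1$.'' That does not cover, say, $c=1/2$, so as written the proof does not prove the stated theorem; your own parenthetical hedge (``presumably one should read off which $c$'') signals the unresolved issue rather than resolving it.

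Ironically, the computation you set up already closes the gap if you push it through instead of passing to the limit. At $\theta=0$ your formula gives $J_0(\rho_i,\rho_j)=(1-c^2)\,t_{\max}^2/(1+t_{\max})$, and the identity (derived exactly as in the proof of Lemma~\ref{maxCQ}, with $s=e^\epsilon-1$ and $D=s^2+4(1-c^2)(s+1)$)
\[
\frac{t_{\max}^2}{1+t_{\max}}=\frac{4s^2}{D-s^2}=\frac{s^2}{(1-c^2)(s+1)}
\]
shows that the $c$-dependence cancels exactly: $J_0(\rho_i,\rho_j)=s^2/(s+1)=M_2^\C(\epsilon;J_0)>\frac{e^\epsilon+1}{e^\epsilon+2}M_2^\C(\epsilon;J_0)=M_3^\C(\epsilon;J_0)$ for every $c\in[1/2,1)$, so Corollary~\ref{main2} applies uniformly. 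The paper does the analogous exact computation at $\theta=1/2$, reducing the desired inequality to $5s+6>4(s+1)$, again valid for all $c$; either exact route is fine, but the limit-plus-continuity step must be replaced by one of them.
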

\begin{proof}
	Set $s=e^\epsilon-1>0$.
	Then $D=s^2+4(1-c^2)(s+1)$ and $t=t_{\max}=2s/(\sqrt{D}-s)$.
	Let $i\not=j$ be positive integers less than or equal to $3$.
	We show that $M_3^\C(\epsilon;J_{1/2}) < J_{1/2}(\rho_i,\rho_j)$.
	First, Theorem~\ref{main0} implies that 
	\[
	M_3^\C(\epsilon;J_{1/2})
	= \frac{4(e^\epsilon-1)^2}{2(e^\epsilon+1)}\cdot\frac{2}{e^\epsilon+2}
	= \frac{4s^2}{(s+2)(s+3)}.
	\]
	Also, it follows form the same calculation as \eqref{eq16} that 
	\[
	J_{1/2}(\rho_i,\rho_j)
	= \frac{t^2}{2+t}\cdot(1-c^2)\cdot\frac{2+t}{1 + t + (t/2)^2(1-c^2)}
	= \frac{(1-c^2)t^2}{1 + t + (t/2)^2(1-c^2)},
	\]
	where we must replace $n$ in \eqref{eq16} with the dimension $d=2$.
	Thus, 
	\begin{align*}
		&\quad M_3^\C(\epsilon;J_{1/2}) < J_{1/2}(\rho_i,\rho_j)
		\iff \frac{4s^2}{(s+2)(s+3)} < \frac{(1-c^2)t^2}{1 + t + (t/2)^2(1-c^2)}\\
		&\iff \frac{(s+2)(s+3)}{s^2} > \frac{4}{1-c^2}\cdot\frac{1 + t + (t/2)^2(1-c^2)}{t^2}\\
		&\iff 1+\frac{5s+6}{s^2} > \frac{4}{1-c^2}\Bigl( \frac{1+t}{t^2} + \frac{1-c^2}{4} \Bigr)
		\iff \frac{5s+6}{s^2} > \frac{4}{1-c^2}\cdot\frac{1+t}{t^2}.
	\end{align*}
	Recalling that $D=s^2+4(1-c^2)(s+1)$ and $t=t_{\max}=2s/(\sqrt{D}-s)$, we have 
	\begin{align*}
		\frac{t^2}{1+t} &= t-1+\frac{1}{1+t}
		= \frac{2s}{\sqrt{D}-s}-1+\frac{\sqrt{D}-s}{\sqrt{D}+s}
		= \frac{2s}{\sqrt{D}-s}+\frac{-2s}{\sqrt{D}+s}\\
		&= \frac{4s^2}{D-s^2}
		= \frac{s^2}{(1-c^2)(s+1)}.
	\end{align*}
	Therefore, 
	\begin{equation*}
		M_3^\C(\epsilon;J_{1/2}) < J_{1/2}(\rho_i,\rho_j)
		\iff \frac{5s+6}{s^2} > \frac{4(s+1)}{s^2}.
	\end{equation*}
	Since the right inequality always holds, 
	so does the left inequality.
	By Corollary~\ref{main4}, 
	$(\rho_i)_{i=1}^3$ does not lie in $\EC_3(\epsilon)$.
\end{proof}

\begin{corollary}
	Let $n\ge3$ be an integer, and 
	$(\rho_i)_{i=1}^3$ be a CQ $\epsilon$-DP $3$-tuple in Lemma~$\ref{lem2'}$ with $d=2$ and $t=t_{\max}$.
	Then every $(\sigma_i)_{i=1}^n\in\CQ_n(\epsilon)$ with $\sigma_i=\rho_i$, $i=1,2,3$, does not lie in $\EC_n(\epsilon)$.
\end{corollary}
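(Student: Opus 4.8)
The plan is to reduce the statement about $n$-tuples to the already-established three-tuple case by restricting attention to the first three coordinates. Suppose $(\sigma_i)_{i=1}^n \in \CQ_n(\epsilon)$ satisfies $\sigma_i = \rho_i$ for $i=1,2,3$, and suppose for contradiction that $(\sigma_i)_{i=1}^n \in \EC_n(\epsilon)$. By definition of $\EC_n(\epsilon)$, there exist $(p_i)_{i=1}^n \in \C_n(\epsilon)$ and density matrices $\tau_k$ with $\sigma_i = \sum_k p_i(k)\tau_k$ for all $i=1,\ldots,n$.

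The key observation is that the truncation $(p_i)_{i=1}^3$ of an $\epsilon$-DP $n$-tuple is again an $\epsilon$-DP $3$-tuple: the condition $\diag(p_i) \le e^\epsilon \diag(p_j)$ for $i,j \in \{1,2,3\}$ is simply a subset of the conditions defining membership in $\C_n(\epsilon)$. Hence $(p_1,p_2,p_3) \in \C_3(\epsilon)$, and therefore $(\rho_i)_{i=1}^3 = (\sigma_i)_{i=1}^3 = (\sum_k p_i(k)\tau_k)_{i=1}^3 \in \EC_3(\epsilon)$. But this contradicts Theorem~\ref{thm1}, which asserts precisely that the three-tuple from Lemma~\ref{lem2'} with $d=2$ and $t=t_{\max}$ does not lie in $\EC_3(\epsilon)$.

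I expect no serious obstacle here; the only point requiring a moment's care is confirming that the definitions of $\C_n(\epsilon)$ and $\EC_n(\epsilon)$ are ``coordinatewise consistent'' under projection onto a subset of indices, i.e. that dropping coordinates of an $\epsilon$-DP tuple (classical or essentially classical) preserves the property. For $\C_n(\epsilon)$ this is immediate from the definition via pairwise inequalities; for $\EC_n(\epsilon)$ it follows because the same probability vectors $p_i$ (restricted to $i \le 3$) and the same $\tau_k$ realize the truncated tuple. One should also note that Theorem~\ref{thm1} applies verbatim since the hypothesis $c \in [1/2,1)$ (the $d=2$ case of Lemma~\ref{lem1'}) is inherited from the construction of the $\rho_i$; no new choice of $c$ is needed. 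The corollary then follows by combining these observations with Theorem~\ref{thm1}.
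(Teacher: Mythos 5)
Your argument is correct and is essentially the paper's own proof, which simply cites Theorem~\ref{thm1} together with the definition of $\EC_n(\epsilon)$; your truncation observation (that restricting $(p_i)_{i=1}^n\in\C_n(\epsilon)$ and the same $\tau_k$ to the indices $i=1,2,3$ witnesses $(\rho_i)_{i=1}^3\in\EC_3(\epsilon)$) is exactly the detail the paper leaves implicit. No gaps.
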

\begin{proof}
	If $(\sigma_i)_{i=1}^n\in\EC_n(\epsilon)$, then $(\sigma_i)_{i=1}^3\in\EC_3(\epsilon)$.
	Thus, the assertion follows from Theorem~\ref{thm1} and Definition~\ref{EC}.
\end{proof}

\section{Conclusion}

We have investigated the difference between the sets $\EC_n(\epsilon)$ and $\CQ_n(\epsilon)$, 
which is represented by the infimum $\epsilon_{\inf}=\epsilon_{\inf}(n,\epsilon)=\inf\cE_n(\epsilon)$.
This infimum has the upper and lower bounds as \eqref{eq01}.
The lower bound can probably be improved, 
but we do not know whether the upper bound can be improved or not.
It is desirable to find tighter bounds for $\epsilon_{\inf}$.

Although we have not fixed the dimension $d$ of the finite-dimensional vector spaces $\mathbb{C}^d$ and $\mathbb{R}^d$, 
it is also important to study the case when $d$ is fixed.
For instance, it is an interesting problem to find extreme points of $\CQ_n^{(d)}(\epsilon)$.
For the classical case, 
Holohan et al.\ \cite{Holohan2} studied extreme points of $\C_n^{(d)}(\epsilon)$.

We have used Lemma~\ref{lem1'} to construct CQ $\epsilon$-DP $n$-tuples that do not lie in $\EC_n(\epsilon)$.
Instead of Lemma~\ref{lem1'}, 
one might use \textit{symmetric, informationally complete, positive-operator-valued measures (SIC-POVMs)}.
In this case, one can probably prove that 
the CQ $\epsilon$-DP $d^2$-tuple $(\rho_i)_{i=1}^{d^2}$ of density matrices on $\mathbb{C}^d$ 
constructed by a SIC-POVM does not lie in $\EC_{d^2}(\epsilon)$.
However, we can prove this statement only for large $\epsilon>0$ if using Corollary~\ref{main4}.
Hence, one needs an alternative criterion instead of Corollary~\ref{main4} to prove the above statement.

\section*{Acknowledgments}
Partial contents of this paper are also used in the author's doctoral thesis.
The author is grateful to Prof.\ Fran\c{c}ois Le Gall and Prof.\ Yoshimichi Ueda for giving me some advice on a part of this paper (precisely, the doctoral thesis).
The author was supported by JSPS KAKENHI Grant Number JP19J20161.

\section*{Appendix}
In this appendix, we discuss linear mappings from $\Herm(d)$ into $\Herm(d')$ 
which are used in quantum information theory.
Recall that $\Herm(d)$ is the set of all Hermitian matrices on $\mathbb{C}^d$.
Denote by $\PSD(d)$ the set of all positive semi-definite matrices on $\mathbb{C}^d$.
First, let us begin with several basic terms (see also a textbook in quantum information theory, e.g., \cite{book1,book2}).
For two linear mapping $\Lambda_i$, $i=1,2$, from $\Herm(d_i)$ into $\Herm(d'_i)$, 
the tensor product $\Lambda_1\otimes\Lambda_2$ is a linear mapping from $\Herm(d_1)\otimes\Herm(d_2)$ into $\Herm(d'_1)\otimes\Herm(d'_2)$.
Since $\Herm(d_1)\otimes\Herm(d_2)$ can be regarded as $\Herm(d_1d_2)$, 
the tensor product $\Lambda_1\otimes\Lambda_2$ is also a linear mapping from $\Herm(d_1d_2)$ into $\Herm(d'_1d'_2)$.
Let $\id_d$ be the identity mapping on $\Herm(d)$.
\begin{itemize}
	\item
	A linear mapping $\Lambda$ from $\Herm(d)$ into $\Herm(d')$ is called \textit{positive} 
	if $\Lambda(\PSD(d))\subset\PSD(d')$.
	\item
	A linear mapping $\Lambda$ from $\Herm(d)$ into $\Herm(d')$ is called \textit{completely positive} 
	if $\Lambda\otimes\id_k$ is positive for every integer $k\ge2$.
	\item
	A linear mapping $\Lambda$ from $\Herm(d)$ into $\Herm(d')$ is called \textit{trace-preserving} 
	if $\Tr\Lambda(X)=\Tr X$ for every $X\in\Herm(d)$.
	\item
	A linear mapping $\Lambda$ from $\Herm(d)$ into $\Herm(d')$ is called \textit{CPTP} 
	if $\Lambda$ is completely positive and trace-preserving.
\end{itemize}
In quantum information theory, a quantum channel is a CPTP map.

\begin{example*}[Entanglement breaking channel]
	The linear mapping $\Lambda$ below is a CPTP map called \textit{entanglement breaking channel}.
	Let $\sigma_1,\ldots,\sigma_m$ be density matrices on $\mathbb{C}^{d'}$ 
	and $(M_k)_{k=1}^m$ be a POVM, i.e., $M_1,\ldots,M_m\ge0$ and $\sum_{k=1}^m M_k=I_d$.
	For example, $(\ketbra{e_k}{e_k})_{k=1}^d$ is a POVM, 
	where $(e_k)_{k=1}^d$ denotes the standard basis of $\mathbb{C}^d$.
	Define the linear mapping $\Lambda$ from $\Herm(d)$ into $\Herm(d')$ as $\Lambda(X) = \sum_{k=1}^m (\Tr M_kX)\sigma_k$.
	It can easily be checked that $\Lambda$ is a CPTP map.
	This fact is used implicitly in this section.
\end{example*}

Next, we prove Proposition~$\ref{CPTP}$.

\begin{proof}[Proof of Proposition~$\ref{CPTP}$]
	First, assume that $(\rho_i)_{i=1}^n$ lies in $\EC_n(\epsilon)$, i.e., 
	there exist $(p_i)_{i=1}^n\in\C_n(\epsilon)$ and a CPTP map $\Lambda$ such that 
	$\rho_i=\Lambda(\diag(p_i))$ for all $i=1,\ldots,n$.
	Denote by $(e_k)_{k=1}^d$ the standard basis of $\mathbb{C}^d$, 
	where $d$ is the dimension of the vector space that $p_1,\ldots,p_n$ inhabit.
	Then $\rho_i = \sum_{k=1}^d p_i(k)\Lambda(\ketbra{e_k}{e_k})$ for all $i=1,\ldots,n$.
	Since all $\Lambda(\ketbra{e_k}{e_k})$ are density matrices, 
	$(\rho_i)_{i=1}^n$ lies in the right-hand side of \eqref{eq-CPTP}.
	\par
	Conversely, assume that $(\rho_i)_{i=1}^n$ lies in the right-hand side of \eqref{eq-CPTP}: 
	there exist $(p_i)_{i=1}^n\in\C_n(\epsilon)$ and density matrices $\sigma_k$ 
	such that $\rho_i=\sum_{k=1}^d p_i(k)\sigma_k$ for all $i=1,\ldots,n$, 
	where $d$ is the dimension of the vector space that $p_1,\ldots,p_n$ inhabit.
	Define the CPTP map $\Lambda$ as 
	$\Lambda(X) = \sum_{k=1}^d \braket{e_k |X| e_k}\sigma_k$.
	Then $\Lambda(\diag(p_i))=\rho_i$ for all $i=1,\ldots,n$.
	Therefore, $(\rho_i)_{i=1}^n$ lies in $\EC_n(\epsilon)$.
\end{proof}

\end{document}